\documentclass[conference]{ieeeconf}  

\IEEEoverridecommandlockouts

\usepackage{graphicx} 
\usepackage{amsmath} 
\usepackage{array}   
\usepackage{booktabs} 
\usepackage{caption} 
\usepackage{siunitx} 
\usepackage{epstopdf} 

\usepackage{cite}
\usepackage{amsmath,amssymb,amsfonts,mathtools,nccmath,bm, tikz, enumerate}
\usepackage{multicol}
\usepackage[nodisplayskipstretch]{setspace}
\newtheorem{definition}{Definition}[section] 
\newtheorem{assumption}{Assumption}[section]
\newtheorem{remark}{Remark}[section]
\newtheorem{theorem}{Theorem}
\newtheorem{lemma}{Lemma}
\newtheorem{proposition}{Proposition}[section]
\newtheorem{corollary}{Corollary}[section]
\usepackage{graphicx}
\usepackage{textcomp}
\usepackage{cuted}
\usepackage{float}
\let\labelindent\relax
\usepackage{enumitem}
\usepackage{xcolor}
\usepackage{algorithm}
\usepackage{algpseudocode}

\def\BibTeX{{\rm B\kern-.05em{\sc i\kern-.025em b}\kern-.08em
    T\kern-.1667em\lower.7ex\hbox{E}\kern-.125emX}}
\allowdisplaybreaks

\newcommand{\Uset}{\mathcal{U}}
\newcommand{\Ut}{\tilde{\mathcal{U}}}

\usepackage{tikz}
\usepackage{subcaption}
\usetikzlibrary{arrows.meta}

\usetikzlibrary{positioning,arrows.meta,shapes.geometric,fit,backgrounds,calc}

\definecolor{bluelight}{HTML}{E7EFFB}   \definecolor{bluedark}{HTML}{2C5AA0}
\definecolor{greenlight}{HTML}{EAF7EC}  \definecolor{greendark}{HTML}{3A8451}
\definecolor{orangelight}{HTML}{FDF1DE} \definecolor{orangedark}{HTML}{D08A2A}
\definecolor{purplelight}{HTML}{EFEAFA} \definecolor{purpledark}{HTML}{6C4FA0}
\definecolor{redlight}{HTML}{FBE7E6}    \definecolor{reddark}{HTML}{B23A32}
\definecolor{graylight}{HTML}{F5F5F5}   \definecolor{graydark}{HTML}{9AA0A6}

\tikzset{
  reqbox/.style   ={rectangle,rounded corners=3pt,draw=bluedark,fill=bluelight,
                     text width=3.6cm,align=left,inner sep=6pt,font=\footnotesize},
  propbox/.style  ={rectangle,rounded corners=3pt,draw=greendark,fill=greenlight,
                     text width=3.6cm,align=left,inner sep=6pt,font=\footnotesize},
  certbox/.style  ={rectangle,rounded corners=3pt,draw=orangedark,fill=orangelight,
                     text width=3.4cm,align=left,inner sep=6pt,font=\footnotesize},
  outbox/.style   ={rectangle,rounded corners=3pt,draw=purpledark,fill=purplelight,
                     text width=3.4cm,align=left,inner sep=6pt,font=\footnotesize},
  failbox/.style  ={rectangle,rounded corners=3pt,draw=reddark,fill=redlight,
                     text width=3.0cm,align=left,inner sep=6pt,font=\footnotesize},
  decision/.style ={diamond,aspect=2.2,draw=orangedark,fill=orangelight,
                     align=center,inner sep=1pt,font=\footnotesize\bfseries,
                     text width=1.5cm},
  groupbg/.style  ={rectangle,rounded corners=5pt,draw=graydark,fill=graylight,
                     inner sep=10pt},
  arr/.style      ={-{Latex[length=2mm]},thick,draw=black!70},
  yarr/.style     ={-{Latex[length=2mm]},thick,draw=greendark},
  narr/.style     ={-{Latex[length=2mm]},thick,dashed,draw=reddark},
  loopc/.style     ={-{Latex[length=2mm]},thick,dashed,draw=purpledark},
}

\begin{document}

\title{\textbf{Scalable Tube-Tightened Multi-Agent Safety
via Certified Constraint Reduction} 
}

\author{A, \IEEEmembership{Member} 
\thanks{A.with the department of \mbox{Mechanical} }
}

\author{Armel Koulong, \IEEEmembership{Member, IEEE}
\thanks{A. Koulong is with the department of \mbox{Mechanical} Engineering, University of Alabama, Tuscaloosa, AL, USA (e-mail: akoulongzoyem@crimson.ua.edu). }
}

\maketitle
\thispagestyle{plain}
\pagestyle{plain}

\begin{abstract}
This paper develops a certified constraint-reduction method for distributed
model predictive control with tube-tightened exponential control barrier
functions (eCBFs) in multi-agent systems. At each prediction stage, pairwise
agent--agent and agent--obstacle eCBF conditions define halfspaces in the local
control space. Rather than enforcing all such halfspaces, a geometry-adaptive
subset is retained and a Farkas certificate verifies that the reduced
admissible set is contained in the full tightened set. For planar inputs, cone
coverage is characterized through the largest angular gap: two extreme
directions suffice in the strict half-plane regime, while other geometries
initialize with three retained constraints and escalate only when
certification fails. Conic multipliers and nominal-aware offsets are obtained
in closed form, without an auxiliary optimization, and the resulting
construction preserves any nominal control already admissible for the full
tightened set. Consequently, the reduced controller inherits the robust safety
guarantee of the underlying tube-eCBF formulation. In a ten-follower,
four-obstacle study, the method retained fewer safety constraints on
average, reproduced the full filter's nominal accept/reject decisions with no
true safety violations, and achieved increasing computational gains as the
constraint count and prediction horizon grew.
\end{abstract} 

\section{Introduction} \label{Introduction}

Distributed model predictive control (DMPC) provides a natural architecture
for constrained multi-agent coordination -- each agent solves a local
finite-horizon optimal control problem while coupling with its neighbors
through exchanged predicted trajectories
\cite{dunbar2006dmpc,dai2017dmpc}. Its computational advantage over
centralized planning, however, can erode in dense environments. Collision
avoidance and obstacle avoidance introduce one or more pairwise safety
constraints for every relevant interaction and prediction stage, so the local
optimization burden grows with both neighborhood density and horizon length \cite{koulong2026cdc}.

Control barrier functions (CBFs) provide a systematic mechanism for enforcing
such safety requirements by converting forward-invariance conditions into
inequalities on the control input \cite{ames2017cbf7}. Exponential and
high-order CBFs extend this construction to constraints of higher relative
degree \cite{xiao2022hocbf}, and barrier certificates have been
used extensively for collision avoidance in multi-agent systems
\cite{koulong2025wc}. More recent work has developed distributed algorithms
for solving CBF-constrained network optimization problems
\cite{mestres2024navigation}, as well
as graph-based barrier representations aimed at large-scale multi-agent
systems \cite{zhang2025gcbf}.

CBFs have also been incorporated directly into predictive multi-robot
controllers. Distributed MPC formulations with collision and obstacle
constraints have long been studied for formation control
\cite{dai2017dmpc}, while recent methods combine DMPC with discrete-time CBFs
or high-order CBFs to obtain predictive safety guarantees
\cite{jiang2024cbfdmpc,wang2025dscmpc}. In the presence of disturbances,
tube-based robust MPC provides a standard means of separating nominal planning
from bounded tracking error \cite{mayne2005robust}. Building on this idea,
\cite{koulong2026cdc} derives support-function tightenings of high-relative-
degree eCBF constraints from robust positively invariant tubes, so that
feasibility of a nominal distributed MPC plan guarantees safety of the true
disturbed multi-agent system.

These constructions strengthen the safety guarantees of distributed
predictive control but leave a scalability issue: every relevant pairwise eCBF
inequality is still imposed explicitly along the prediction horizon. Recent
multi-agent CBF methods address related computational burdens by activating
safety constraints only for neighbors that are sufficiently close or
approaching \cite{su2026motion}, or by allocating pairwise collision-avoidance
responsibility among agents \cite{autenrieb2025auction}. Such methods reduce
the number of active constraints through interaction relevance or assignment.
They do not, however, ask whether one safety inequality is already implied by
other retained inequalities.

Constraint reduction has been studied separately in the MPC literature.
Online constraint-removal methods identify inequalities that are guaranteed to
be inactive and omit them without changing the optimizer
\cite{jost2015removal}. Constraint-adaptive
MPC extends this idea using reachability and optimality arguments to recover
the original MPC law while imposing only a state-dependent subset of
constraints \cite{nouwens2023cammpc}. Constraint aggregation instead replaces
many nonlinear inequalities by a smaller approximate representation of the
feasible region \cite{pereira2022aggregation}. These approaches motivate
reducing the optimization burden before the solver is called, but their
selection criteria are based on inactivity, optimality, reachability, or
approximation rather than logical implication among safety halfspaces.

This paper develops a certified reduction principle for tube-tightened
multi-agent eCBFs based on implication among the safety halfspaces themselves.
At a fixed predicted state, a geometry-adaptive subset of the pairwise eCBF
halfspaces is retained, and a Farkas containment certificate verifies that its
intersection is contained in the full admissible-control set. Thus a constraint
may be removed even if it is safety-relevant or potentially active, provided it
is implied by the retained constraints. To the best of the authors' knowledge,
existing multi-agent CBF reduction methods do not exploit such certified
halfspace implication for tube-tightened pairwise eCBFs. For planar inputs, the
directional certificate reduces to cone coverage -- a largest-angular-gap rule
retains two extreme directions in the strict half-plane regime and otherwise
initializes with three directions and escalates only if certification fails.
Conic Carath\'eodory then yields closed-form multipliers through at most a
$2\times2$ solve, while the reduced offsets admit a closed-form nominal-aware
construction.

The main contributions are:
\begin{enumerate}
\item A Farkas-certified inner reduction of the tube-tightened pairwise eCBF
admissible-control set, guaranteeing
$\widetilde{\mathcal U}_i\subseteq\mathcal U_i$ independently of the rule used
to propose the retained subset.

\item A planar geometry-adaptive construction based on cone coverage, with
closed-form conic multipliers and nominal-aware offsets; the construction
retains two extreme directions in the strict half-plane regime and otherwise
initializes with three constraints and escalates only when certification fails.

\item A robust-safety guarantee together with preservation of any prescribed
nominal control that is already admissible for the full tightened set, and a
stagewise DMPC implementation that reuses retained indices, re-certifies them
against the evolving geometry, and recovers the unreduced problem as its
full-set fallback.
\end{enumerate}

\section{Problem Formulation} \label{ProblemFormulation}

We consider $N$ follower agents $i \in \mathcal{V} = \{1,\ldots,N\}$ and a
leader $0$. Each follower has $n$-th order nonlinear Brunovsky dynamics
\begin{align}\label{eq:ct_follower}
\dot{x}_p^{i} &= x_{p+1}^{i}, &p \in \{1,\dots,n-1\}, \notag\\
\dot{x}_n^{i} &= f^{i}(x^{i},t) + u^{i} + w^{i}, &w^{i}(t)\in\mathcal D_i,
\end{align}
where $x_p^i\in\mathbb{R}^d$, $u^i\in\mathbb{R}^d$ is the input, and
$w^i\in\mathbb{R}^d$ is a bounded time-varying disturbance with
$\|w^{i}(t)\|\le \bar w^i$. In control-affine form
$\dot x^{\,i} = F_x(x^{\,i},t) + F_u u^{\,i} + F_w w^{\,i}$,
\begin{align}\label{eqn:controlaffine}
  F_x(x^{\,i},t) = \begin{bmatrix} x_2^{\,i}\\ \vdots \\ x_n^{\,i}\\ f^{\,i}(x^{\,i},t)\end{bmatrix},
  \qquad
  F_u = F_w = \begin{bmatrix} 0\\ \vdots\\ 0\\ I_d\end{bmatrix}.
\end{align}
The leader dynamics satisfy
\begin{align}\label{eq:ct_leader}
\dot{x}_p^{\,0} &= x_{p+1}^{\,0},\hspace{85pt} p \in \{1,\dots,n-1\}, \notag\\
\dot{x}_n^{\,0} &= f^{\,0}(x^{\,0},t) + w^{0},\hspace{40pt} w^{0}(t)\in\mathcal D_0,
\end{align}
where $w^0\in\mathbb{R}^d$ is a bounded time-varying disturbance of the leader.
Both $f^i$ and $f^0$ are locally Lipschitz with constants $L_i$ and $L_0$ over a
compact domain of interest.

The agents communicate over a fixed weighted graph
$\mathcal{G} = (\mathcal{V},\mathcal{\phi},[A]_{ij})$ with weighted adjacency $[A]_{ij} >0$ iff $(j,i)\in\mathcal{\phi}$, and Laplacian $L:=[D]_{ij}-[A]_{ij}$. The neighbors
set is $\mathcal{N}_i=\{j: [A]_{ij}>0\}$; each agent has perpetual access to its
own state $x^i(t)\in\mathbb{R}^{nd}$ and to $x^j(t)$ for every
$j\in\mathcal{N}_i$. The augmented graph
$\bar{\mathcal{G}}=(\bar{\mathcal{V}},\bar{\mathcal{\phi}})$ with
$\bar{\mathcal{V}}=\{0,1,\dots,N\}$ adds the leader interaction matrix
$B_0=\mathrm{diag}\{[B]_{i0}\}$.

\begin{assumption}\label{ass:graph}
The augmented graph $\bar{\mathcal{G}}$ with node set $\{0\}\cup\mathcal{V}$
contains a spanning tree rooted at the leader.
\end{assumption}

Assumption~\ref{ass:graph} is a design-stage condition verified before
deployment; it implies $\nu_1 L+\nu_2 B_0$ is a nonsingular
M-matrix~\cite{koulong2025acc} for any $\nu_1,\nu_2>0$.


Each follower must drive the synchronization error
$(x^i_p-x^0_p-\psi^i_p)$ to zero under bounded disturbances and hard safety
constraints. Under the feedforward ancillary law of~\cite{koulong2026cdc},
\begin{equation}
  u^i(t)=\bar{v}^i(t_k)-K^i_p\delta x^i(t)+f^0(\hat x^0,t)-f^i(x^i,t),
  \label{eq:control}
\end{equation}
the tracking error $\delta x^i:=x^i-\bar x^i$ is confined for all admissible
disturbances to the robust positively invariant ellipsoid
\begin{equation}\label{eq:tube}
  \mathcal{Z}^i:=\{z\in\mathbb{R}^{nd}: z^\top P_i z\le r_i^2\},
\end{equation}
with $P_i\succ0$ and radius $r_i$ given in closed form
by~\cite[Proposition~1]{koulong2026cdc}. Here $\bar{v}^i\in\mathcal{V}^i:=\mathcal{U}^i\ominus K^i_p\mathcal{Z}^i$
is an optimal control input held constant over $[t_k,t_{k+1})$, while
$-K^i_p\delta x^i(t)$ and the feedforward act continuously. Note that $r_i$ is a
worst-case invariant radius and is therefore computed once offline,
whereas the margins $\delta_{ij}$ below depend on the barrier gradient and are
evaluated online at every step.

\subsection{Tightened eCBF Functions \cite{koulong2026cdc}}
Safety is encoded by the pairwise and obstacle barrier functions
\begin{equation}\label{eq:barrier}
  h_{ij}(\bar x^i,\bar x^j)=\|\bar x^i_1-\bar x^j_1\|^2-D^2,
  \;\;
  h_{iO}(\bar x^i)=\|\bar x^i_1-c_O\|^2-R_O^2,
\end{equation}
with $D$ the minimum inter-agent separation and $(c_O,R_O)$ the center and
effective radius of obstacle $O$.

\begin{assumption}\label{ass:reldeg}
Each $h_{ij}$ and $h_{iO}$ is convex, $r$ times continuously differentiable, and
has uniform relative degree $r$ with respect to \eqref{eqn:controlaffine}. The
gains $\kappa_0,\dots,\kappa_{r-1}$ are chosen so that
$s^{r}+\kappa_{r-1}s^{r-1}+\cdots+\kappa_0$ is Hurwitz.
\end{assumption}

For the barriers \eqref{eq:barrier} under \eqref{eqn:controlaffine} the relative
degree is $r=n$, since $u^i$ enters only at the $n$-th derivative.
The Lie derivatives along $F_x$ are defined recursively by
$L_{F_x}^0h_\bullet:=h_\bullet$ and
$L_{F_x}^{q}h_\bullet:=\nabla(L_{F_x}^{q-1}h_\bullet)^\top F_x$, and along $F_u$
by $L_{F_u}h_\bullet:=\nabla h_\bullet^\top F_u$. The support-function
tightening mechanism of~\cite[Lemma 2.3]{koulong2025wc} is applied to both
inter-agent and obstacle avoidance:
\begin{multline}\label{eq:eCBF_col_final}
\hspace{-9pt} \Phi_{ij}^{\text{tight}}\big(\bar{x}^{i}, \bar{x}^{j}, \bar{v}^{i}\big)
 =L_{F_{x}}^{r} h_{ij}(\bar{x}^{i},\bar{x}^{j})
 + \sum_{q=1}^{r-1} \kappa_q L_{F_{x}}^{q} h_{ij}(\bar{x}^{i},\bar{x}^{j}) \\
 \hspace{9pt} + \big(L_{F_{u}} L_{F_{x}}^{r-1} h_{ij}(\bar{x}^{i},\bar{x}^{j})\big) \bar{v}^{i}
+\kappa_0 \big( h_{ij} (\bar x^i,\bar x^j) -\delta_{ij}(\bar{x}^{i},\bar{x}^{j})\big),
\end{multline}
with the obstacle counterpart $\Phi_{iO}^{\text{tight}}(\bar{x}^{i},\bar{v}^{i})$
defined identically by replacing $h_{ij}$ with $h_{iO}$ and $\delta_{ij}$ with
$\delta_{iO}$. The margins are the exact ellipsoidal support functions of the
tubes $\mathcal Z^i$ along the barrier gradients,
\begin{align}\label{eq:margins}
\delta_{ij} &:= r_i\sqrt{\nabla_{\bar{x}^i}h_{ij}^\top P_i^{-1}\nabla_{\bar{x}^i}h_{ij}}
             + r_j\sqrt{\nabla_{\bar{x}^j}h_{ij}^\top P_j^{-1}\nabla_{\bar{x}^j}h_{ij}}, \notag\\
\delta_{iO} &:= r_i\sqrt{\nabla_{\bar{x}^i}h_{iO}^\top P_i^{-1}\nabla_{\bar{x}^i}h_{iO}} .
\end{align}

\subsection{Halfspace form}
Only the term $\big(L_{F_u}L_{F_x}^{r-1}h_{ij}\big)\bar v^{\,i}$ in
\eqref{eq:eCBF_col_final} contains the decision variable; every other quantity is
a number once $\bar x^i$ and the neighbor's predicted trajectory are fixed.
Imposing $\Phi_{ij}^{\text{tight}}\ge0$ and isolating $\bar v^{\,i}$ therefore
gives the halfspace
\begin{equation}\label{eq:lin}
  a_{ij}^\top \bar v^{\,i} \;\ge\; b_{ij},
\end{equation}
with
\begin{equation}
  a_{ij} := \big(L_{F_u}L_{F_x}^{r-1}h_{ij}(\bar x^{i},\bar x^{j})\big)^{\!\top},
  \label{eq:aij}
\end{equation}
\begin{multline}
  b_{ij} := -\Big[\,L_{F_x}^{r}h_{ij}(\bar x^{i},\bar x^{j})
            + \sum_{q=1}^{r-1}\kappa_q L_{F_x}^{q}h_{ij}(\bar x^{i},\bar x^{j})
           \\ + \kappa_0\big(h_{ij}(\bar x^{i},\bar x^{j})-\delta_{ij}(\bar x^{i},\bar x^{j})\big)\Big].
  \label{eq:bij}
\end{multline}
The obstacle constraint $\Phi_{iO}^{\text{tight}}\ge0$ yields
$a_{iO}^\top\bar v^{\,i}\ge b_{iO}$ by the same substitution, replacing
$h_{ij}$ with $h_{iO}$ and $\delta_{ij}$ with $\delta_{iO}$.

\begin{remark}\label{rem:pivot}
Along the horizon, agent $j$'s predicted trajectory is generated by the plan
$\hat v^{\,j}$ broadcast at the previous instant, so
$L_{F_x}^{r}h_{ij}$ carries the known term $-a_{ij}^\top\hat v^{\,j}$. Together
with the internal nonlinearities $f^{i},f^{j}$ and the tube margin
$\delta_{ij}$, every neighbors and disturbance-dependent quantity enters
\eqref{eq:lin} through the scalar $b_{ij}$ and never through the
vector $a_{ij}$. Every neighbors-dependent quantity therefore sits inside the scalar
$b_{ij}$, and none of them touches the vector $a_{ij}$.
\end{remark}

\subsubsection*{Problem Statement}
Each agent must avoid every neighbor and every obstacle, producing one
tightened constraint of the form \eqref{eq:eCBF_col_final} per threat. Inside a distributed MPC, the optimization does not return a single input -- it returns a sequence
$\{\bar v^i(l)\}_{l=0}^{N_p-1}$, and \eqref{eq:eCBF_col_final} must hold at
every predicted step, not only the first. Each constraint is therefore
imposed $N_p$ times, and the program carries
\begin{equation}\label{eq:growth}
  N_p\big(|\mathcal{N}_i| + N_{\mathrm{obs}}\big)
\end{equation}
inequalities per solve, where $N_p$ is the prediction horizon length. Since the
cost of a quadratic program grows superlinearly in its constraint count \cite{boyd2004convex}, and
since \eqref{eq:growth} scales with both neighborshood size and horizon length,
this is the dominant computational burden of the scheme in dense formations or
over long horizons.
Rather than enforcing every tightened inequality explicitly, we ask whether a
small, geometry-adaptive subset can be enforced instead, with a certificate that
the resulting admissible set is contained in the full one.

\section{Methodology} \label{Methodology}

Let $\mathcal O_i$ denote the set of obstacle-induced safety constraints
considered by agent $i$, and define
$\mathcal J_i := \mathcal N_i\cup\mathcal O_i.
$
Thus $\mathcal J_i$ indexes all eCBF halfspaces participating in the
constraint reduction, including both neighboring agents and relevant
obstacles.
We define
$ \Uset_i
    :=
    \left\{
        u\in\mathbb U:
        a_{ij}^{\top}u\ge b_{ij},
        \ \forall j\in\mathcal J_i
    \right\},
$
and, for a retained index set $\mathcal K_i\subseteq\mathcal J_i$,
$\Ut_i
    :=
    \left\{
        u\in\mathbb U:
        \tilde a_{ik}^{\top}u\ge\tilde b_{ik},
        \ \forall k\in\mathcal K_i
    \right\}.$
Safety is preserved whenever $\Ut_i\subseteq\Uset_i$. For compactness, we denote the directional tube margin by
$\delta(a) := \sigma_i(Ea)+\sigma_j(-Ea), \text{with } E
:=
[\,I_d\;0\;0\,]^\top
\in\mathbb R^{nd\times d},$
so that, for the pairwise constraint evaluated at
$(\bar x^i,\bar x^j)$, $ \delta_{ij}(\bar x^i,\bar x^j) =
\delta\!\left(a_{ij}(\bar x^i,\bar x^j)\right).$
For obstacle constraints, the second support-function term is omitted.

\begin{lemma}[Containment certificate]\label{lem:farkas}
If for every neighbor $j\in\mathcal J_i$ there exists
$\lambda_j\in\mathbb R_{\ge0}^{|\mathcal K_i|}$ satisfying
$ a_{ij}
    =
    \sum_{k\in\mathcal K_i}
    \lambda_{jk}\tilde a_{ik},
    \;
    \sum_{k\in\mathcal K_i}
    \lambda_{jk}\tilde b_{ik}
    \ge b_{ij}, $
then
$ \Ut_i\subseteq\Uset_i.$
\end{lemma}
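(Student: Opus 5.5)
The plan is to use the easy, sufficiency direction of the affine Farkas lemma, which here reduces to a direct chain of inequalities. I will fix an arbitrary $u\in\Ut_i$ and show $u\in\Uset_i$. The two sets share the factor $\mathbb U$, so membership in $\mathbb U$ carries over immediately. It then remains to verify $a_{ij}^\top u\ge b_{ij}$ for each $j\in\mathcal J_i$, treating each index separately using its own multiplier vector $\lambda_j$.

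For a fixed $j\in\mathcal J_i$, I will substitute the conic representation $a_{ij}=\sum_{k\in\mathcal K_i}\lambda_{jk}\tilde a_{ik}$ into $a_{ij}^\top u$. This gives $a_{ij}^\top u=\sum_{k}\lambda_{jk}\,\tilde a_{ik}^\top u$. Since $u\in\Ut_i$, each term satisfies $\tilde a_{ik}^\top u\ge\tilde b_{ik}$. Because $\lambda_{jk}\ge0$, multiplying preserves each inequality, and summing yields
\begin{equation*}
a_{ij}^\top u\;\ge\;\sum_{k\in\mathcal K_i}\lambda_{jk}\tilde b_{ik}\;\ge\;b_{ij}.
\end{equation*}
The last step is exactly the offset condition in the hypothesis. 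As $j$ was arbitrary, $u\in\Uset_i$, which proves $\Ut_i\subseteq\Uset_i$.

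There is no real obstacle in this argument. The only point requiring care is the role of nonnegativity of $\lambda_j$: it is what allows the retained inequalities to be aggregated without reversing direction. Two cases also need a brief remark. First, if $\Ut_i$ is empty, the inclusion holds vacuously. Second, indices $j\in\mathcal K_i$ are covered trivially by taking $\lambda_j$ to be the unit vector $e_j$, provided $\tilde a_{ij}=a_{ij}$ and $\tilde b_{ij}\ge b_{ij}$; if instead the reduced offsets differ, the same general argument applies unchanged. Finally, the argument is purely pointwise at the fixed predicted state. It therefore applies stagewise and does not depend on how $\mathcal K_i$ was selected, which matches the paper's claim that the certificate is independent of the selection rule.
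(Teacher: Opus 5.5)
Your proof is correct and matches the paper's argument: fix $u\in\widetilde{\mathcal U}_i$, scale each retained inequality by $\lambda_{jk}\ge0$, sum, and use the conic representation of $a_{ij}$ together with the offset condition to obtain $a_{ij}^\top u\ge b_{ij}$ for every $j\in\mathcal J_i$. Your extra remarks (membership in $\mathbb U$ carrying over, the vacuous empty case, the retained-index case) are harmless but unnecessary, since the hypothesis already supplies a certificate for every $j\in\mathcal J_i$.
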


\begin{proof}
Let $u\in\Ut_i$, so $\tilde a_{ik}^\top u\ge\tilde b_{ik}$ for all $k$. Since
$\lambda_{jk}\ge0$, multiplying preserves each inequality; summing and applying
the two conditions gives $a_{ij}^\top u\ge b_{ij}$. As $j$ was arbitrary,
$u\in\Uset_i$.
\end{proof}

From the offset condition of Lemma~\ref{lem:farkas}, we decompose the full and
retained offsets into nominal and tube-tightening terms:
\begin{align*}
    b_{ij}
    &=
    \underbrace{b_{ij}^{\mathrm{nom}}}_{\text{drift, plan, eCBF stack}}
    +
    \underbrace{\kappa_0\delta(a_{ij})}_{\text{tube margin}},\quad \tilde b_{ik}
    =
    \tilde b_{ik}^{\mathrm{nom}}
    +
    \kappa_0\delta(\tilde a_{ik}).
\end{align*}
The offset condition
$\sum_k\lambda_{jk}\tilde b_{ik}\ge b_{ij}$ then becomes
\begin{equation}\label{eq:split}
  \underbrace{\sum_k\lambda_{jk}\tilde b_{ik}^{\mathrm{nom}}}
             _{\text{nominal supply}}
  +
  \underbrace{\sum_k\lambda_{jk}\kappa_0\delta(\tilde a_{ik})}
             _{\text{margin supply}}
  \ge
  \underbrace{b_{ij}^{\mathrm{nom}}}_{\text{nominal demand}}
  +
  \underbrace{\kappa_0\delta(a_{ij})}_{\text{margin demand}} .
\end{equation}
Tube tightening increases $b_{ij}$ and hence makes the eCBF constraint more
restrictive. The following lemma shows that, under the directional cone
certificate, these robustness margins compose in the safe direction required
for containment.

\begin{lemma}[Tube-tightening certificate transfer]
\label{lem:tube-transfer}

Let $\kappa_0\ge0$ and suppose
$
    a_{ij}
    =
    \sum_{k\in\mathcal K_i}
    \lambda_{jk}\tilde a_{ik},
    \;
    \lambda_{jk}\ge0.
$
Then
$
    \sum_{k\in\mathcal K_i}
    \lambda_{jk}\kappa_0\delta(\tilde a_{ik})
    \ge
    \kappa_0\delta(a_{ij}).
$
Consequently, any nominal offset certificate
$
    \sum_{k\in\mathcal K_i}
    \lambda_{jk}\tilde b^{\mathrm{nom}}_{ik}
    \ge b^{\mathrm{nom}}_{ij}
$
is preserved after tube tightening:
$
    \tilde b_{ik}
    =
    \tilde b^{\mathrm{nom}}_{ik}
    +\kappa_0\delta(\tilde a_{ik}),
    \;
    b_{ij}
    =
    b^{\mathrm{nom}}_{ij}
    +\kappa_0\delta(a_{ij})
$
implying
$
    \sum_{k\in\mathcal K_i}
    \lambda_{jk}\tilde b_{ik}
    \ge b_{ij}.
$
\end{lemma}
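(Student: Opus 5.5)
The approach is to view the directional margin $\delta(a)=\sigma_i(Ea)+\sigma_j(-Ea)$ as a sum of support functions composed with linear maps. Such a function is positively homogeneous and subadditive, i.e.\ sublinear. The transfer inequality is then simply subadditivity applied to the conic combination $a_{ij}=\sum_k\lambda_{jk}\tilde a_{ik}$, and the offset statement follows by adding inequalities.

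\emph{Step 1: sublinearity of $\sigma_i$.} Here $\sigma_i(y)=\sup_{z\in\mathcal Z^i} y^\top z$ is the support function of the tube $\mathcal Z^i$. For the ellipsoid \eqref{eq:tube} it equals $r_i\sqrt{y^\top P_i^{-1}y}$, consistent with \eqref{eq:margins}. As a supremum of linear functionals, it satisfies
\[
\sigma_i(\alpha y)=\alpha\,\sigma_i(y)\quad(\alpha\ge0),\qquad \sigma_i(y_1+y_2)\le\sigma_i(y_1)+\sigma_i(y_2),
\]
since $\sup_z (y_1+y_2)^\top z\le \sup_z y_1^\top z+\sup_z y_2^\top z$. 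Equivalently, $y\mapsto\|P_i^{-1/2}y\|$ is a norm. Precomposing with the linear maps $a\mapsto Ea$ and $a\mapsto -Ea$ preserves both properties, and so does summing. Hence $\delta$ is sublinear on $\mathbb R^d$. For obstacle constraints only the $\sigma_i$ term is present, and the same argument applies.

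\emph{Step 2: transfer.} By induction on $|\mathcal K_i|$, subadditivity and homogeneity give
\[
\delta(a_{ij})=\delta\Big(\sum_k\lambda_{jk}\tilde a_{ik}\Big)\le\sum_k\delta(\lambda_{jk}\tilde a_{ik})=\sum_k\lambda_{jk}\delta(\tilde a_{ik}),
\]
where $\lambda_{jk}\ge0$ is used in the last equality. Multiplying by $\kappa_0\ge0$ yields the first claim.

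\emph{Step 3: offsets.} Add the nominal certificate $\sum_k\lambda_{jk}\tilde b^{\rm nom}_{ik}\ge b^{\rm nom}_{ij}$ to the Step~2 inequality. This gives the split inequality \eqref{eq:split}, and by the stated decompositions it is exactly $\sum_k\lambda_{jk}\tilde b_{ik}\ge b_{ij}$.

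\emph{Main obstacle.} The delicate point is consistency of the margin definition. The paper writes $\delta_{ij}$ in terms of state-space gradients $\nabla_{\bar x^i}h_{ij}$, but in terms of the input direction $a_{ij}$ through $E$. I must therefore check that $\delta$ is genuinely a function of $a$ alone, given by composing a fixed linear map with fixed support functions. If the retained rows $\tilde a_{ik}$ are mixed across different pairs, with different $j$ and hence different $\sigma_j$, the margin must be treated as a single sublinear function, e.g.\ by using a common worst-case neighbor tube. Otherwise the term $\delta(\tilde a_{ik})$ is not one fixed function of its argument.

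Once $\delta$ is fixed as a function of direction only, the argument is the standard subadditivity of support functions. I would state this fixing as the interpretation under which the lemma holds.
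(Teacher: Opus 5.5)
Your proposal is correct and follows essentially the same route as the paper. Both arguments use sublinearity (positive homogeneity and subadditivity) of the ellipsoidal support functions composed with $a\mapsto \pm Ea$, apply it to the conic combination $a_{ij}=\sum_k\lambda_{jk}\tilde a_{ik}$, and then add the nominal certificate. Your caveat is also well founded: the paper's proof tacitly treats $\delta$ as a single fixed function of direction. If retained rows come from different neighbors $k$ (with $\sigma_k$ in place of $\sigma_j$) or from obstacles (with the second term omitted), the transfer inequality needs a common or dominating margin to hold literally.
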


\begin{proof}
For the ellipsoidal tube
$\mathcal Z_i=\{e:e^\top P_i e\le r_i^2\}$,
$
    \sigma_i(Ea)
    =
    r_i\sqrt{a^\top E^\top P_i^{-1}Ea}.
$
Hence each support-function term is positively homogeneous and subadditive,
and so is
$
    \delta(a)=\sigma_i(Ea)+\sigma_j(-Ea).
$
Using $\lambda_{jk}\ge0$ and the directional certificate,
\begin{align*}
    \sum_{k\in\mathcal K_i}
        \lambda_{jk}\delta(\tilde a_{ik})
    &=
    \sum_{k\in\mathcal K_i}
        \delta(\lambda_{jk}\tilde a_{ik}) \\
    &\ge
    \delta\!\left(
        \sum_{k\in\mathcal K_i}
        \lambda_{jk}\tilde a_{ik}
    \right)
    =
    \delta(a_{ij}).
\end{align*}
Multiplication by $\kappa_0\ge0$ proves the first claim. Therefore,
\begin{align*}
    \sum_{k\in\mathcal K_i}\lambda_{jk}\tilde b_{ik}
    &=
    \sum_{k\in\mathcal K_i}
        \lambda_{jk}\tilde b^{\mathrm{nom}}_{ik}
    +
    \kappa_0
    \sum_{k\in\mathcal K_i}
        \lambda_{jk}\delta(\tilde a_{ik}) \\
    &\ge
    b^{\mathrm{nom}}_{ij}
    +
    \kappa_0\delta(a_{ij})
    =
    b_{ij}.
\end{align*}
\end{proof}

Lemma~\ref{lem:tube-transfer} shows that tube tightening preserves a valid
nominal containment certificate: the same conic multipliers remain valid after
the robustness margins are added. Thus tube tightening does not invalidate the
constraint-reduction certificate. In the implementation, the reduction is
applied directly to the already tightened offsets $b_{ij}$.

\subsection{Selecting the directions}
\label{sec:cone}

\begin{definition}[Cone Membership]
\label{def:cone_mem}
Given retained directions
$\{\tilde a_{i1},\ldots,\tilde a_{iK}\}\subset\mathbb R^d$,
an original direction $a_{ij}$ satisfies the cone-membership condition if
\begin{equation}\label{eq:cone}
    a_{ij}
    \in
    \operatorname{cone}
    \{\tilde a_{i1},\ldots,\tilde a_{iK}\},
\end{equation}
or, equivalently, if there exist coefficients $\lambda_{jk}\ge0$ such that
$
    a_{ij}
    =
    \sum_{k=1}^{K}
    \lambda_{jk}\tilde a_{ik}.
$
For the containment certificate, \eqref{eq:cone} is required for every
original direction $a_{ij}$, $j\in\mathcal J_i$, as illustrated in
Figure~\ref{fig:cone-vs-positive-span}a.
\end{definition}

\begin{definition}[Positive Spanning]
\label{def:pos_span}
The retained directions positively span $\mathbb R^d$ if
$
    \operatorname{cone}
    \{\tilde a_{i1},\ldots,\tilde a_{iK}\}
    =
    \mathbb R^d,
$
as illustrated in Figure~\ref{fig:cone-vs-positive-span}b.
Equivalently, every $x\in\mathbb R^d$ admits a representation
$
    x
    =
    \sum_{k=1}^{K}
    \mu_k\tilde a_{ik},
    \;
    \mu_k\ge0.
$
\end{definition}
Thus positive spanning requires conic coverage of the entire ambient space,
whereas cone membership requires coverage only of the particular original
directions $\{a_{ij}\}_{j\in\mathcal J_i}$.

\begin{figure}[htbp]
\centering
\includegraphics[width=3.40in]{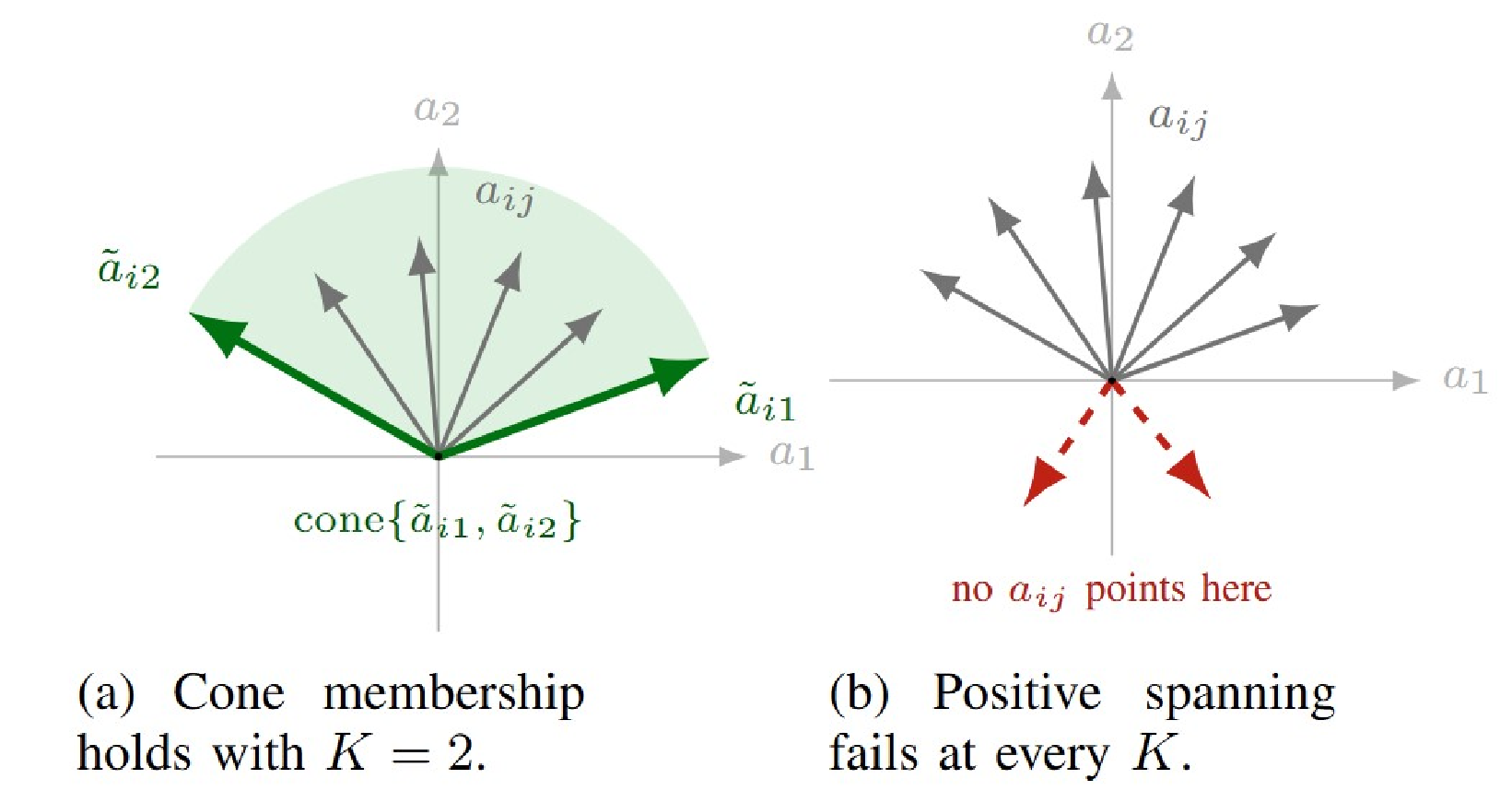}
\caption{Both panels show the same set of normals $\{a_{ij}\}$, clustered within
an arc of width less than $\pi$ as is typical for an agent on the boundary of a
formation. (a) The two extreme rays generate every normal, so \eqref{eq:cone}
holds with $K=2$. (b) Positive spanning would additionally require directions in
the lower half-plane (red dashed lines); since no $a_{ij}$ points there, no retained
subset can supply them and the condition fails for every $K$.}
\label{fig:cone-vs-positive-span}
\end{figure}

Condition~\eqref{eq:cone} is strictly weaker than requiring the retained
directions to positively span $\mathbb R^d$. Indeed, positive spanning implies
\eqref{eq:cone}, whereas the converse need not hold. Since
Lemma~\ref{lem:farkas} requires only cone coverage, positive spanning would
unnecessarily exclude valid reductions. For the distance-based barriers
considered here, $a_{ij}$ points away from neighbor $j$ (and analogously from
an obstacle). Hence, for boundary agents, the normals often occupy only a
limited angular sector. The full set may then fail to positively span
$\mathbb R^2$, in which case no retained subset can do so, while
\eqref{eq:cone} may still hold. The eCBF-halfspace polyhedron may also be
unbounded in such configurations; boundedness is therefore not the relevant
requirement. The reduction requires only
$
a_{ij}\in\operatorname{cone}\{\tilde a_{ik}\}_{k\in\mathcal K_i},
\; j\in\mathcal J_i.
$
In $d=2$, sorting the nonzero normals by angle and defining $G$ as the largest
gap between consecutive directions, including wrap-around, identifies the
corresponding cone geometry and guides the retained-set construction.

\begin{proposition}[Half-plane regime]\label{prop:cone2}
If \(G>\pi\), all normals lie within the complementary arc of width \(
2\pi-G<\pi
\). The cone they generate is therefore generated by the two extreme rays of that arc, namely the two directions adjacent to the largest gap. Retaining these two directions satisfies \eqref{eq:cone} for every $j$, so $K=2$ suffices.
\end{proposition}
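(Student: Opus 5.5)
I would prove Proposition~\ref{prop:cone2} by working in polar coordinates on the nonzero normals and reducing cone membership to an explicit $2\times 2$ nonnegative solve. Write each nonzero normal as $a_{ij}=\|a_{ij}\|(\cos\theta_j,\sin\theta_j)^\top$. Sort the angles cyclically. Let the largest gap $G>\pi$ lie between consecutive angles $\theta_+$ and $\theta_-$ in the cyclic order, with the gap running counterclockwise from $\theta_+$ to $\theta_-$. After a rotation (which preserves cones and angles), place $\theta_-=0$. Every other normal is then reached from $\theta_-$ counterclockwise without crossing the gap, so all angles lie in $[0,\beta]$ with $\beta:=\theta_+=2\pi-G<\pi$. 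This gives the first claim, that the normals lie in an arc of width $2\pi-G<\pi$. If some $a_{ij}=0$, it lies trivially in every cone, so it can be set aside.

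The main step is to show that every $a$ with angle $\theta\in[0,\beta]$ lies in $\operatorname{cone}\{\tilde a_1,\tilde a_2\}$, where $\tilde a_1$ has angle $0$ and $\tilde a_2$ has angle $\beta$. Since $0<\beta<\pi$, the two extreme directions are linearly independent (the case $\beta=0$ is degenerate: all normals are positive multiples of one ray and $K=1$ already suffices). So $a=\lambda_1\tilde a_1+\lambda_2\tilde a_2$ has a unique solution. Cramer's rule, or the law of sines, gives
\[
\lambda_1=\frac{\|a\|\sin(\beta-\theta)}{\|\tilde a_1\|\sin\beta},\qquad
\lambda_2=\frac{\|a\|\sin\theta}{\|\tilde a_2\|\sin\beta}.
\]
For $\theta\in[0,\beta]\subset[0,\pi)$, both $\sin\theta\ge0$ and $\sin(\beta-\theta)\ge0$ hold, and $\sin\beta>0$. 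Hence $\lambda_1,\lambda_2\ge0$. This establishes \eqref{eq:cone} for every $j\in\mathcal J_i$ with $K=2$. These same $\lambda$'s are the closed-form multipliers that Lemma~\ref{lem:farkas} requires.

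Conversely, the cone generated by all the normals equals $\operatorname{cone}\{\tilde a_1,\tilde a_2\}$. It contains that cone because $\tilde a_1$ and $\tilde a_2$ are themselves normals, and it is contained in it by the previous step. This justifies calling the two adjacent-to-gap directions the generating extreme rays.

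The main obstacle I expect is bookkeeping rather than analysis. I need to show rigorously that $G>\pi$ forces all angles into the complementary closed arc with the gap endpoints as its extremes, including the wrap-around convention and possibly repeated angles. I would handle this by defining the gap explicitly as $2\pi$ minus the cyclic span, and noting that the gaps sum to $2\pi$. The strict inequality $\beta<\pi$ is exactly what keeps $\sin\beta>0$. At $\beta=\pi$ the two rays are antiparallel and the argument fails, which explains the strictness in the hypothesis.
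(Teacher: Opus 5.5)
Your proposal is correct and follows the same route as the paper: place all normals in the complementary arc of width $2\pi-G<\pi$, then show that the two gap-adjacent rays generate the full cone. The paper only asserts that a planar cone whose generators lie in a sector narrower than $\pi$ is generated by its extreme rays; you verify this explicitly with the nonnegative multipliers $\lambda_1\propto\sin(\beta-\theta)$ and $\lambda_2\propto\sin\theta$, which also anticipates the closed-form multipliers of Proposition~\ref{prop:cf}.
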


\begin{proof}
A gap \(G>\pi\) leaves all normals within the complementary arc of width
\(
2\pi-G<\pi.
\)
A convex cone in \(\mathbb R^2\) whose generators lie within an angular sector of width less than \(\pi\) is generated by its two extreme rays. Hence every original normal is a nonnegative combination of the two directions adjacent to the largest gap, and \eqref{eq:cone} follows.
\end{proof}

The boundary case $G=\pi$ requires separate interpretation. In this case all
normals lie in a closed half-plane. If they are collinear, their conic hull is
a line and two antiparallel rays are mathematically sufficient. Otherwise,
their conic hull is a closed half-plane. Since the retained directions are
selected from the original set, any retained subset satisfying
\eqref{eq:cone} must generate the same conic hull as the full set. Thus two
directions cannot suffice in the non-collinear boundary case: two rays
generate either a pointed cone or a line, whereas the full cone is a
half-plane.

For a uniform numerical implementation, the entire $G\le\pi$ branch is
initialized with $K=3$ whenever at least three constraints are present; the
exceptional collinear case is therefore handled conservatively without a
separate collinearity test.

Following Proposition~\ref{prop:cone2}, the strict half-plane regime is completely resolved: two extreme rays are sufficient to generate every original normal. Outside this regime, however, a two-direction representation is no longer guaranteed. The problem therefore shifts from identifying the extreme rays of a single cone to finding a small subset whose conic hull covers the full set of normals. In \(\mathbb R^2\), the natural next candidate is a three-direction selection, leading to the following construction.

\begin{proposition}[Cone-coverage escalation]
\label{prop:span}

Suppose $G\le\pi$. If $G<\pi$, then
$
    \operatorname{cone}\{a_{ij}\}_{j\in\mathcal J_i}
    =\mathbb R^2,
$
and consequently any retained subset satisfying \eqref{eq:cone} must
positively span $\mathbb R^2$.
If $G=\pi$, the full conic hull is either a line in the collinear case or a
closed half-plane otherwise. Two antiparallel rays are sufficient in the
former case, whereas at least three retained directions are required in the
latter.

For a uniform implementation, the $G\le\pi$ branch is initialized with
$K=3$ whenever $|\mathcal J_i|\ge3$. For each candidate size $K$, the
geometric selector proposes a retained set of cardinality $K$, which is
accepted only if it satisfies the exact cone-coverage condition
\eqref{eq:cone}. If cone coverage fails, $K$ is increased by one and a new
candidate is proposed. The procedure terminates no later than
$K=|\mathcal J_i|$, where the retained set is the full safety-constraint set
and \eqref{eq:cone} holds trivially.
\end{proposition}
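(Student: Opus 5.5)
The proof has three parts: the case $G<\pi$, the case $G=\pi$, and termination of the escalation procedure. Throughout, the nonzero normals are $\{a_{ij}\}_{j\in\mathcal J_i}$, sorted by angle as $\theta_1\le\cdots\le\theta_m$. Consecutive gaps $g_\ell$, including the wrap-around gap, sum to $2\pi$, and $G=\max_\ell g_\ell$.

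\emph{Case $G<\pi$.} We show the conic hull $C:=\operatorname{cone}\{a_{ij}\}$ is all of $\mathbb R^2$. Take an arbitrary unit direction $x$ at angle $\phi$. Since every gap is less than $\pi$, $\phi$ falls in some gap $[\theta_\ell,\theta_{\ell+1}]$ of width $g_\ell<\pi$, taken cyclically. Two directions separated by an angle strictly between $0$ and $\pi$ are linearly independent. Their cone is exactly the sector between them, which one checks by solving the $2\times2$ system and noting both coefficients are nonnegative by the sine rule. Hence $x\in C$. If $\phi$ coincides with some $\theta_\ell$, membership is trivial. So $C=\mathbb R^2$.

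Now let $\mathcal K_i$ be any retained subset satisfying \eqref{eq:cone}. Then $C\subseteq\operatorname{cone}\{\tilde a_{ik}\}$. Combined with $\operatorname{cone}\{\tilde a_{ik}\}\subseteq C$, since the retained directions are drawn from the originals, this gives equality. So the retained set positively spans $\mathbb R^2$ in the sense of Definition~\ref{def:pos_span}. Positive spanning of $\mathbb R^2$ requires at least three vectors, because any two generate a pointed cone or a line. This justifies initializing with $K=3$.

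\emph{Case $G=\pi$.} Here there is a gap of exactly $\pi$, so all normals lie in the closed half-plane bounded by the line through the two normals $\theta_\ell$ and $\theta_\ell+\pi$ adjacent to that gap. These two normals are antiparallel.

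If all normals lie on that line, $C$ is the line. The two antiparallel rays generate it, so two directions suffice.

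Otherwise, some normal lies strictly inside the open half-plane. The argument from the first case, applied to the remaining gaps (each of width less than $\pi$), shows that $C$ contains every direction in the closed half-plane. Hence $C$ equals that half-plane. As in the first case, any valid retained subset must generate all of $C$. Two rays generate only a pointed cone or a line, never a half-plane, so at least three directions are needed. Conversely, three suffice: the two antiparallel boundary rays together with one interior normal generate the half-plane.

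\emph{Termination.} The procedure runs $K=3,4,\dots$, accepting a candidate only when \eqref{eq:cone} is verified exactly. Since $K$ increases by one each time and is capped at $|\mathcal J_i|$, it stops after finitely many steps. At $K=|\mathcal J_i|$ the candidate is the whole set, where \eqref{eq:cone} holds trivially by taking $\lambda_{jk}=1$ for $k=j$ and $0$ otherwise. Whatever retained set is accepted then satisfies Lemma~\ref{lem:farkas} for the directional part.

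\emph{Main obstacle.} The step needing the most care is the sector claim for $G<\pi$ that every direction lies in the cone of its two bracketing normals. This is where the strict inequality $g_\ell<\pi$ is essential, and the wrap-around gap must be included in the bracketing.
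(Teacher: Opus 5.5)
Your proof is correct and follows the paper's argument. Both establish $\operatorname{cone}\{\tilde a_{ik}\}=\operatorname{cone}\{a_{ij}\}$ for any accepted retained set, identify this cone as $\mathbb R^2$, a line, or a closed half-plane according to $G$, rule out two rays for the half-plane, and obtain termination from $K$ increasing up to the full-set fallback. Your differences are only elaborations: you prove $\operatorname{cone}\{a_{ij}\}=\mathbb R^2$ for $G<\pi$ by explicit bracketing between consecutive normals instead of the paper's ``no closed half-plane contains all normals'' remark, and you add the correct observation that three directions (the antiparallel pair plus one interior normal) suffice in the non-collinear $G=\pi$ case.
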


\begin{proof}
Let
$
    C_i
    :=
    \operatorname{cone}\{a_{ij}\}_{j\in\mathcal J_i},
    $ and $
    \widetilde C_i
    :=
    \operatorname{cone}
    \{\tilde a_{ik}\}_{k\in\mathcal K_i}.
$
Because the retained directions are selected from the original set,
$
    \widetilde C_i\subseteq C_i.
$
Conversely, if \eqref{eq:cone} holds, every original generator belongs to
$\widetilde C_i$, and therefore
$
    C_i\subseteq\widetilde C_i.
$
Hence every accepted retained set satisfies
$
    \widetilde C_i=C_i.
$

If $G<\pi$, no closed half-plane contains all original normals, so
$C_i=\mathbb R^2$. Hence every accepted retained set positively spans
$\mathbb R^2$.

If $G=\pi$, $C_i$ is either a line in the collinear case or a closed
half-plane otherwise. In the latter case two rays cannot generate $C_i$, so
at least three retained directions are required.

Every failed cone-coverage test increases $K$, while
$K\le|\mathcal J_i|$. At $K=|\mathcal J_i|$, all original directions are
retained and \eqref{eq:cone} holds trivially. Hence the escalation
terminates.
\end{proof}

Propositions~\ref{prop:cone2} and~\ref{prop:span} address the selection problem -- they identify a small set of retained directions intended to satisfy the cone condition of Lemma~\ref{lem:farkas}. Containment, however, requires more than selecting suitable directions; for every dropped constraint, the corresponding nonnegative multipliers must also be constructed explicitly. In two dimensions this certification step does not require an auxiliary linear program. Once the retained directions are ordered by angle, each non-collinear covered normal can be represented using the two retained rays bounding an appropriate conic sector; a collinear normal requires only one retained ray.

\begin{proposition}[Closed-form conic multipliers in $\mathbb R^2$]
\label{prop:cf}
Let
$
    a_{ij}\in
    \operatorname{cone}
    \{\tilde a_{ik}\}_{k\in\mathcal K_i}.
$
Then $a_{ij}$ admits a conic representation using at most two retained
directions. If $a_{ij}$ lies on a retained ray $\tilde a_{ip}$, i.e., is positively
collinear with $\tilde a_{ip}$, then
$
    a_{ij}=\lambda_{jp}\tilde a_{ip},
    \; \lambda_{jp}\ge0.
$
Otherwise, there exist two retained rays $\tilde a_{ip}$ and
$\tilde a_{iq}$ whose angular separation satisfies
$0<\Delta\theta_{pq}<\pi$ and whose cone contains $a_{ij}$, such that
$
    a_{ij}
    =
    \lambda_{jp}\tilde a_{ip}
    +
    \lambda_{jq}\tilde a_{iq},
    \;
    \lambda_{jp},\lambda_{jq}\ge0.
$
The multipliers are given by
$\begin{bmatrix}
        \lambda_{jp}\\
        \lambda_{jq}
  \end{bmatrix} =
  \begin{bmatrix}
        \tilde a_{ip} & \tilde a_{iq}
  \end{bmatrix}^{-1}a_{ij},$
with $\lambda_{jk}=0$ for $k\notin\{p,q\}$.
\end{proposition}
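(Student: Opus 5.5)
The plan is to prove this as conic Carath\'eodory specialized to $\mathbb R^2$, made constructive by sorting the retained rays by angle.

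\emph{Existence of a two-term representation.} Since $a_{ij}\in\operatorname{cone}\{\tilde a_{ik}\}_{k\in\mathcal K_i}$, write $a_{ij}=\sum_k\mu_k\tilde a_{ik}$ with $\mu_k\ge0$. By conic Carath\'eodory, we may choose such a representation whose support consists of linearly independent generators. The elimination step is the standard one: if the support is dependent, take a nontrivial relation $\sum_k\gamma_k\tilde a_{ik}=0$ with some $\gamma_k>0$. Subtract $t\gamma$ from $\mu$, where $t=\min_{\gamma_k>0}\mu_k/\gamma_k$. This zeroes at least one coefficient and keeps the rest nonnegative. Iterating terminates, and in $\mathbb R^2$ the final support has at most two elements. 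The case $a_{ij}=0$ is trivial, with all multipliers zero.

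\emph{Case split on the support.}
\begin{itemize}
\item If the support is a single ray $\tilde a_{ip}$, then $a_{ij}=\lambda_{jp}\tilde a_{ip}$ with $\lambda_{jp}\ge0$. So $a_{ij}$ is positively collinear with $\tilde a_{ip}$, which is the first branch of the statement.
\item If the support is two independent rays $\tilde a_{ip},\tilde a_{iq}$, they are not collinear. Hence their angular separation, measured as the smaller of the two arcs, satisfies $0<\Delta\theta_{pq}<\pi$. The matrix $[\tilde a_{ip}\;\tilde a_{iq}]$ is nonsingular because $\det=\|\tilde a_{ip}\|\|\tilde a_{iq}\|\sin\Delta\theta_{pq}\neq0$. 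The representation is therefore unique and equals $[\tilde a_{ip}\;\tilde a_{iq}]^{-1}a_{ij}$. Its entries coincide with the Carath\'eodory coefficients, so they are nonnegative. Setting $\lambda_{jk}=0$ for $k\notin\{p,q\}$ completes the certificate.
\end{itemize}

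\emph{Constructive selection of $p,q$.} To match the paper's ``ordered by angle'' description, I would also show how to find $p,q$ directly. Sort the retained rays by angle and locate the angular sector between consecutive retained rays that contains $\arg a_{ij}$. If that sector has width below $\pi$, its two bounding rays serve as $p,q$, and nonnegativity follows from the $2\times2$ solve (Cramer's rule: each coefficient is a ratio of sines of subangles, both in $[0,\pi)$).

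\emph{Main obstacle.} The hard step is the consecutive-sector construction when the containing gap between consecutive retained rays has width $\ge\pi$. By hypothesis $a_{ij}$ lies in the cone, so it can only sit on a boundary ray of such a gap; that gives the collinear branch. When the retained cone is a half-plane or a line, one instead picks a ray on each side of $a_{ij}$ within the cone. I would settle this by invoking the Carath\'eodory existence argument rather than the sector search: existence already guarantees some independent pair containing $a_{ij}$, and uniqueness of the $2\times2$ solve makes the formula valid for any such pair.
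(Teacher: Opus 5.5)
Your proposal is correct and follows essentially the paper's route: conic Carath\'eodory in $\mathbb R^2$, linear independence of the two supporting rays (hence $0<\Delta\theta_{pq}<\pi$ and a nonsingular $2\times2$ system), and nonnegativity of the unique solution because it coincides with the conic coefficients. You are more explicit than the paper, which simply asserts that a sector of width below $\pi$ containing $a_{ij}$ exists, whereas you take the pair directly from the Carath\'eodory support. Your concern about consecutive gaps of width $\ge\pi$ is unnecessary: a vector in the retained cone can never lie strictly inside such a gap, so the angular sector search always succeeds.
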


\begin{proof}
By the conic form of Carath\'eodory's theorem, every vector in a cone in
$\mathbb R^2$ can be represented using at most two generators. If $a_{ij}$ lies on a retained ray, one generator suffices and the first
case follows. Otherwise choose two retained rays forming a sector of angular width strictly less than $\pi$ whose cone contains $a_{ij}$. Their angular separation is strictly
between $0$ and $\pi$, so they are linearly independent and the associated
$2\times2$ system has a unique solution. Since $a_{ij}$ lies in their cone,
both multipliers are nonnegative.
\end{proof}

Proposition~\ref{prop:cf} therefore converts the geometric selection of
Propositions~\ref{prop:cone2}--\ref{prop:span} into the explicit nonnegative
multipliers required by the directional condition of
Lemma~\ref{lem:farkas}. Once cone coverage holds, the remaining offset
condition is enforced constructively by the nominal-aware offset design
introduced next. Consequently, cone coverage is the only certification
failure mode at this stage; failure of cone coverage triggers escalation in
$K$.
A separate escalation may subsequently be required if the certified reduced
DMPC problem is infeasible. This is an OCP-feasibility issue rather than a
failure of the containment certificate.

\subsection{Choosing the offsets}
Direction selection determines the normals of the reduced halfspaces but not
their offsets. Here, $b_{ij}$ denotes the already tube-tightened offset of the
full eCBF constraint. Given certified multipliers satisfying
$
    a_{ij}
    =
    \sum_{k=1}^{K}\lambda_{jk}\tilde a_{ik},
    \;
    \lambda_{jk}\ge0.
$
Lemma~\ref{lem:farkas} requires the reduced offsets to satisfy
$
    \sum_{k=1}^{K}
    \lambda_{jk}\tilde b_k
    \ge b_{ij},
    \;
    j\in\mathcal J_i.
$
Among all such certified offsets, we choose those that minimize the
worst-case offset margin relative to a prescribed nominal control
$v^{\mathrm{nom}}$:
\begin{equation}\label{eq:nomaware}
  \begin{aligned}
  \min_{\tilde b,\epsilon}\quad & \epsilon\\
  \text{s.t.}\quad
  & \tilde b_k
      \le \tilde a_{ik}^{\top}v^{\mathrm{nom}}+\epsilon,
      && k=1,\ldots,K,\\
  & \sum_{k=1}^{K}
      \lambda_{jk}\tilde b_k
      \ge b_{ij},
      && j\in\mathcal J_i,\\
  & \tilde b_k\ge b_k^{\mathrm{kept}},
      && k=1,\ldots,K .
  \end{aligned}
\end{equation}
Here, $\epsilon\in\mathbb R$ is an auxiliary scalar measuring the worst-case
offset margin of the reduced constraints relative to
$v^{\mathrm{nom}}$; minimizing it selects the certified offsets that are most
favorable to the nominal control. The containment inequalities remain hard constraints; hence the offset
selection affects conservatism and nominal-control preservation, but not the
validity of the containment certificate.

\begin{proposition}[Closed-form nominal-aware offsets]
\label{prop:offset}
Let
$
    s_k:=\tilde a_{ik}^{\top}v^{\mathrm{nom}},
    \;
    d_j:=\sum_{k=1}^{K}\lambda_{jk}.
$
If $a_{ij}\neq0$, then cone coverage implies $d_j>0$. The optimal
value of \eqref{eq:nomaware} is
$
\epsilon^\star
=
\max\left\{
    \max_{1\le k\le K}
        \bigl(b_k^{\mathrm{kept}}-s_k\bigr),
    \;
    \max_{j\in\mathcal J_i}
        \frac{
            b_{ij}
            -
            \sum_{k=1}^{K}\lambda_{jk}s_k
        }{d_j}
\right\}.
$
One optimal offset vector is
$
    \tilde b^\star
    =
    s+\epsilon^\star\mathbf 1.
$
\end{proposition}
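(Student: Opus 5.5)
The plan is to treat \eqref{eq:nomaware} as a linear program and establish optimality in two directions. First, every feasible pair $(\tilde b,\epsilon)$ satisfies $\epsilon\ge\epsilon^\star$. Second, the pair $(s+\epsilon^\star\mathbf 1,\epsilon^\star)$ is feasible. Before that, I would settle the positivity of $d_j$. The multipliers satisfy $\lambda_{jk}\ge0$ by cone coverage, obtained via Proposition~\ref{prop:cf}, so $d_j\ge0$. If $d_j=0$, then every $\lambda_{jk}=0$, and the representation $a_{ij}=\sum_k\lambda_{jk}\tilde a_{ik}$ forces $a_{ij}=0$, which contradicts $a_{ij}\neq0$. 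Hence $d_j>0$, and the ratios in $\epsilon^\star$ are well defined.

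\textbf{Lower bound.} Fix any feasible $(\tilde b,\epsilon)$. Combining the first and third constraint families gives $b_k^{\mathrm{kept}}\le\tilde b_k\le s_k+\epsilon$, so $\epsilon\ge b_k^{\mathrm{kept}}-s_k$ for every $k$. For the containment rows, multiply $\tilde b_k\le s_k+\epsilon$ by $\lambda_{jk}\ge0$; nonnegativity is exactly what preserves the inequality direction. Summing over $k$ and chaining with the second constraint family yields
\[
b_{ij}\le\sum_{k}\lambda_{jk}\tilde b_k\le\sum_k\lambda_{jk}s_k+\epsilon\, d_j .
\]
Dividing by $d_j>0$ gives $\epsilon\ge\bigl(b_{ij}-\sum_k\lambda_{jk}s_k\bigr)/d_j$ for every $j\in\mathcal J_i$. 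Taking the maximum over all these bounds shows $\epsilon\ge\epsilon^\star$.

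\textbf{Achievability.} Set $\tilde b^\star=s+\epsilon^\star\mathbf 1$ and $\epsilon=\epsilon^\star$, and check the three constraint families in turn.
\begin{enumerate}
\item The first family holds with equality by construction.
\item The third family reads $s_k+\epsilon^\star\ge b_k^{\mathrm{kept}}$, which holds because $\epsilon^\star$ dominates each term $b_k^{\mathrm{kept}}-s_k$.
\item The second family reads $\sum_k\lambda_{jk}\tilde b^\star_k=\sum_k\lambda_{jk}s_k+\epsilon^\star d_j\ge b_{ij}$. This is equivalent to $\epsilon^\star$ dominating the $j$-th ratio, again because $d_j>0$.
\end{enumerate}
Thus the lower bound is attained, so $\epsilon^\star$ is the optimal value and $\tilde b^\star$ is an optimal offset vector.

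The only delicate step is the sign bookkeeping in the aggregation. It requires $\lambda_{jk}\ge0$ when multiplying the nominal-margin inequalities, and $d_j>0$ when dividing. If some $a_{ij}=0$, that row has no multipliers and would need separate handling; the statement excludes this case by hypothesis. Degenerate index sets are harmless: if $\mathcal J_i$ is empty, the second maximum is vacuous and the argument uses the kept bounds alone.
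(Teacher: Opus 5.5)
Your proof is correct and matches the paper's argument. The paper fixes $\epsilon$, uses $\lambda_{jk}\ge0$ to note that $\sum_k\lambda_{jk}\tilde b_k$ is componentwise nondecreasing, and reduces feasibility to the largest admissible vector $s+\epsilon\mathbf 1$; that is your lower-bound and achievability steps combined, and you additionally justify $d_j>0$, which the paper's proof only asserts.

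One small correction to your closing aside: since $\mathcal K_i\subseteq\mathcal J_i$, an empty $\mathcal J_i$ forces $K=0$. There are then no kept bounds and the program is unbounded below, though this degenerate case lies outside the statement's intent.
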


\begin{proof}
For fixed $\epsilon$, the first and third constraints of
\eqref{eq:nomaware} require
$
    b^{\mathrm{kept}}
    \le \tilde b
    \le s+\epsilon\mathbf 1 .
$
Since $\lambda_{jk}\ge0$, each containment expression
$\sum_{k=1}^{K}\lambda_{jk}\tilde b_k$ is nondecreasing componentwise
in $\tilde b$. Hence a feasible $\tilde b$ exists if and only if the
largest admissible vector,
$
    \tilde b=s+\epsilon\mathbf 1,
$
satisfies
$
    s_k+\epsilon\ge b_k^{\mathrm{kept}},
    \; k=1,\ldots,K,
$
and
$
    \sum_{k=1}^{K}\lambda_{jk}(s_k+\epsilon)
    \ge b_{ij},
    \; j\in\mathcal J_i.
$
Because $d_j=\sum_{k=1}^{K}\lambda_{jk}>0$, these inequalities are
equivalent to
$
    \epsilon\ge b_k^{\mathrm{kept}}-s_k
$
and
$
    \epsilon
    \ge
    \frac{
        b_{ij}-\sum_{k=1}^{K}\lambda_{jk}s_k
    }{d_j},
$
respectively. Taking the maximum of all lower bounds gives
$\epsilon^\star$, and
$\tilde b^\star=s+\epsilon^\star\mathbf1$ is feasible at this value.
\end{proof}

Thus, once cone coverage is established, Proposition~\ref{prop:offset}
provides finite reduced offsets satisfying the offset condition of
Lemma~\ref{lem:farkas}. Hence cone coverage is the only
certificate-construction failure mode; no separate offset-level
certification or escalation is required.

At $K=|\mathcal J_i|$, we choose the full-set fallback
$
    \tilde a_{ij}=a_{ij},
    \;
    \tilde b_{ij}=b_{ij},
$
with identity multipliers. Then
$
    \epsilon^\star
    =
    \max_{j\in\mathcal J_i}(b_{ij}-s_j),
$
so $b_{ij}\le s_j+\epsilon^\star$ for every $j$. Hence
$\tilde b=b$ is feasible at the optimal value $\epsilon^\star$ and is
therefore itself an optimal offset choice. Moreover,
$
    \sum_k\lambda_{jk}\tilde a_{ik}=a_{ij};
    \; \sum_k\lambda_{jk}\tilde b_{ik}=b_{ij},
$
so the reduced and unreduced constraint sets coincide exactly.

The selector therefore affects only the achieved reduction. Correctness is
determined by certification of the proposed retained set -- a failed directional
candidate increases the retained constraint count, while every accepted
candidate satisfies the same containment guarantee.


\section{Main Result} \label{MainResult}
Having established a constructive procedure for proposing and certifying
reduced constraint sets, we now state the guarantees inherited by the
resulting controller. The implemented construction acts directly on the
tube-tightened eCBF halfspaces. Lemma~\ref{lem:tube-transfer} additionally
shows that a nominal-level directional certificate can be transferred through
the tube tightening with the same conic multipliers.

\begin{assumption}[Tube validity]\label{as:tube}
The tube and disturbance bounds used in constructing the tightened offsets
$b_{ij}$ are valid for the true closed-loop tracking error.
\end{assumption}

Propositions~\ref{prop:cone2}--\ref{prop:span} characterize the planar
selection geometry, while Proposition~\ref{prop:cf} provides the conic
multipliers once a candidate retained set passes cone coverage.
Proposition~\ref{prop:offset} then constructs offsets satisfying the remaining
condition of Lemma~\ref{lem:farkas}.

\begin{theorem}[Certified reduction preserves robust safety]
\label{thm:main}
Let Assumptions~\ref{ass:reldeg} and~\ref{as:tube} hold. At a fixed
prediction stage, let
$
    a_{ij}^{\top}v_i\ge b_{ij},
    \; j\in\mathcal J_i,
$
we denote the full tube-tightened safety eCBF constraints. Suppose the retained
directions and nonnegative multipliers satisfy
$
    a_{ij}
    =
    \sum_{k=1}^{K}\lambda_{jk}\tilde a_{ik},
    \;
    \lambda_{jk}\ge0,
    \;
    j\in\mathcal J_i,
$
and let the reduced offsets be constructed according to
Proposition~\ref{prop:offset}. Then
$
    \widetilde{\mathcal U}_i
    \subseteq
    \mathcal U_i.
$
Consequently, every reduced-feasible control satisfies all full
tube-tightened safety eCBF constraints. If these tightened constraints are
satisfied by the applied control along the closed-loop trajectory, the tube
validity assumption holds, and the required eCBF initial conditions are
satisfied, then the corresponding original safety sets are forward invariant
for the true disturbed system.
\end{theorem}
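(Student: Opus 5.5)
The proof has three links. The offsets of Proposition~\ref{prop:offset} satisfy the offset condition, so Lemma~\ref{lem:farkas} gives containment. Containment transfers the full tightened constraints to every reduced-feasible control. Robust forward invariance is then inherited from the tube-eCBF result of~\cite{koulong2026cdc}.

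\textbf{Step 1: the offsets certify.} Fix a prediction stage and take the retained directions and multipliers as given, with $a_{ij}=\sum_k\lambda_{jk}\tilde a_{ik}$ and $\lambda_{jk}\ge0$.
\begin{itemize}
\item For every $j$ with $a_{ij}\neq0$, Proposition~\ref{prop:offset} gives $d_j>0$ and makes $\tilde b^\star=s+\epsilon^\star\mathbf 1$ feasible for \eqref{eq:nomaware}.
\item The second constraint of \eqref{eq:nomaware} is then exactly the offset condition $\sum_k\lambda_{jk}\tilde b^\star_k\ge b_{ij}$.
\item The case $a_{ij}=0$ does not fit the closed-form expression, since $d_j$ may vanish. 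I would treat it separately. The constraint reads $0\ge b_{ij}$, which is independent of $v_i$. If $b_{ij}\le0$ it holds for every control and can be dropped from the maximum defining $\epsilon^\star$. If $b_{ij}>0$, the full set $\mathcal U_i$ is empty and the implementation falls back to the full set, where containment is trivial. I expect this degenerate case to be the only genuinely delicate point.
\end{itemize}
Lemma~\ref{lem:farkas} then yields $\widetilde{\mathcal U}_i\subseteq\mathcal U_i$. Both sets carry the same intersection with $\mathbb U$, so that intersection is harmless.

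\textbf{Step 2: reduced feasibility implies full feasibility.} Any $v_i\in\widetilde{\mathcal U}_i$ lies in $\mathcal U_i$. By the halfspace derivation \eqref{eq:lin}--\eqref{eq:bij}, the inequality $a_{ij}^\top v_i\ge b_{ij}$ is equivalent to $\Phi^{\text{tight}}_{ij}\ge0$, and likewise for obstacles. Lemma~\ref{lem:tube-transfer} is not needed here, because the reduction acts on the already tightened $b_{ij}$. I would still cite it to note that the tube margins do not break a nominal-level certificate.

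\textbf{Step 3: robust forward invariance.} Suppose the applied nominal input satisfies every tightened constraint along the closed loop.
\begin{itemize}
\item Under Assumption~\ref{as:tube}, $\delta x^i\in\mathcal Z^i$ for all admissible disturbances.
\item The margin $\delta_{ij}$ is the support function of the tubes along the barrier gradient. This yields the lower bound $\Phi_{ij}(x^i,x^j,u^i)\ge\Phi^{\text{tight}}_{ij}(\bar x^i,\bar x^j,\bar v^i)\ge0$ for the true state, via~\cite[Lemma~2.3]{koulong2025wc} and~\cite{koulong2026cdc}.
\item Under Assumption~\ref{ass:reldeg}, the Hurwitz eCBF chain with valid initial conditions makes $\{h_{ij}\ge0\}$ and $\{h_{iO}\ge0\}$ forward invariant.
\end{itemize}
This final step is a direct appeal to the underlying tube-eCBF theorem rather than a new argument. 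The only thing to verify is that its hypothesis, satisfaction of the full tightened constraints, is exactly what Step 2 delivers.
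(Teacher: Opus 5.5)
Your proposal is correct and follows the paper's proof essentially step for step. Proposition~\ref{prop:offset} supplies the offset condition, Lemma~\ref{lem:farkas} then gives $\widetilde{\mathcal U}_i\subseteq\mathcal U_i$, and forward invariance is inherited from the underlying tube-eCBF result under Assumptions~\ref{ass:reldeg} and~\ref{as:tube}, without using Lemma~\ref{lem:tube-transfer}. Your separate handling of $a_{ij}=0$ goes beyond the paper, which implicitly works with nonzero normals; it is harmless, but the appeal to the implementation's fallback in the $b_{ij}>0$ subcase is unnecessary, because whenever the construction of Proposition~\ref{prop:offset} exists, its second constraint already gives the offset condition for every $j$ and Lemma~\ref{lem:farkas} applies regardless.
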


\begin{proof}
For every $j\in\mathcal J_i$, the accepted directional certificate gives
$
    a_{ij}
    =
    \sum_{k=1}^{K}
    \lambda_{jk}\tilde a_{ik},
    \;
    \lambda_{jk}\ge0.
$
By Proposition~\ref{prop:offset},
$
    \sum_{k=1}^{K}
    \lambda_{jk}\tilde b_k
    \ge b_{ij}.
$
Hence both conditions of Lemma~\ref{lem:farkas} hold, and therefore
$
    \widetilde{\mathcal U}_i
    \subseteq
    \mathcal U_i.
$
Thus every reduced-feasible control also satisfies every full tube-tightened
eCBF constraint. Under Assumption~\ref{as:tube}, these tightened inequalities
imply the corresponding eCBF inequalities for the true disturbed state.
Together with Assumption~\ref{ass:reldeg}, the required eCBF initial
conditions, and the underlying eCBF invariance result, this yields forward
invariance of the original safety sets.
\end{proof}

\begin{corollary}[Preservation of a safe nominal control]
\label{cor:nominal}
Under the hypotheses of Proposition~\ref{prop:offset}, let
$\mathcal U_i$ and $\widetilde{\mathcal U}_i$ denote the full and reduced
tube-tightened admissible sets, respectively. If
$
    v^{\mathrm{nom}}\in\mathcal U_i,
$
then
$
    \epsilon^\star\le0,
    \;
    v^{\mathrm{nom}}\in\widetilde{\mathcal U}_i .
$
\end{corollary}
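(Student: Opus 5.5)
The plan is to read the bound $\epsilon^\star\le0$ directly off the closed-form expression in Proposition~\ref{prop:offset}. We show that each term inside the outer maximum is nonpositive whenever $v^{\mathrm{nom}}\in\mathcal U_i$. Membership of $v^{\mathrm{nom}}$ in $\widetilde{\mathcal U}_i$ then follows from the explicit optimal offsets $\tilde b^\star=s+\epsilon^\star\mathbf 1$. Throughout, I use that the retained directions are selected from the original set. Concretely, for each retained index $k$ there is an original index $j_k\in\mathcal J_i$ with $\tilde a_{ik}=a_{ij_k}$, and $b_k^{\mathrm{kept}}=b_{ij_k}$ is the corresponding tightened original offset.

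\emph{Step 1 (kept-offset terms).} Since $v^{\mathrm{nom}}\in\mathcal U_i$, every original constraint holds at $v^{\mathrm{nom}}$, in particular the retained ones. This gives $s_k=\tilde a_{ik}^\top v^{\mathrm{nom}}=a_{ij_k}^\top v^{\mathrm{nom}}\ge b_{ij_k}=b_k^{\mathrm{kept}}$. Hence $b_k^{\mathrm{kept}}-s_k\le0$ for all $k$.

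\emph{Step 2 (containment terms).} By the certified directional identity $a_{ij}=\sum_k\lambda_{jk}\tilde a_{ik}$ and linearity, we have $\sum_k\lambda_{jk}s_k=\bigl(\sum_k\lambda_{jk}\tilde a_{ik}\bigr)^\top v^{\mathrm{nom}}=a_{ij}^\top v^{\mathrm{nom}}\ge b_{ij}$. So each numerator $b_{ij}-\sum_k\lambda_{jk}s_k$ is nonpositive. Since $d_j>0$ for $a_{ij}\neq0$ (Proposition~\ref{prop:offset}), each ratio is nonpositive.

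Taking the maximum of the quantities in Steps 1 and 2 yields $\epsilon^\star\le0$.

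\emph{Step 3 (membership).} With $\tilde b^\star=s+\epsilon^\star\mathbf 1$ we get $\tilde a_{ik}^\top v^{\mathrm{nom}}=s_k\ge s_k+\epsilon^\star=\tilde b_k^\star$ for every $k$. Moreover, $v^{\mathrm{nom}}\in\mathbb U$ because $v^{\mathrm{nom}}\in\mathcal U_i$. Therefore $v^{\mathrm{nom}}\in\widetilde{\mathcal U}_i$.

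The main point requiring care is not the algebra but the bookkeeping in Step 1 and Step 2. Step 1 needs the identification of $b_k^{\mathrm{kept}}$ with the original tightened offset of the retained constraint. Step 2 needs the degenerate case $a_{ij}=0$ to be excluded or handled separately. For such a $j$, the constraint reads $0\ge b_{ij}$, which already holds because $v^{\mathrm{nom}}\in\mathcal U_i$. That constraint can therefore be dropped from the maximum without affecting either conclusion.
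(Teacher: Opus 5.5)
Your proposal is correct and follows the paper's proof essentially step for step. You show the kept-offset terms are nonpositive from $v^{\mathrm{nom}}\in\mathcal U_i$, show the containment ratios are nonpositive via $\sum_k\lambda_{jk}s_k=a_{ij}^\top v^{\mathrm{nom}}\ge b_{ij}$ and $d_j>0$, and obtain membership from $\tilde b_k\le s_k+\epsilon^\star\le s_k$; the paper does exactly this, except that it invokes the LP constraint $\tilde b_k\le s_k+\epsilon^\star$ rather than the specific $\tilde b^\star=s+\epsilon^\star\mathbf 1$ (so its argument covers any optimal offset, including the full-set fallback $\tilde b=b$), while your explicit treatment of $v^{\mathrm{nom}}\in\mathbb U$ and of $a_{ij}=0$ fills small details the paper leaves implicit.
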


\begin{proof}
Since $v^{\mathrm{nom}}\in\mathcal U_i$, every retained original constraint
satisfies
$
    s_k=\tilde a_{ik}^{\top}v^{\mathrm{nom}}
    \ge b_k^{\mathrm{kept}},
    \; k=1,\ldots,K.
$
Moreover, using the directional certificate,
$
    \sum_{k=1}^{K}\lambda_{jk}s_k
    =
    \left(\sum_{k=1}^{K}\lambda_{jk}\tilde a_{ik}\right)^{\top}
    v^{\mathrm{nom}}
    =
    a_{ij}^{\top}v^{\mathrm{nom}}
    \ge b_{ij},
    \; j\in\mathcal J_i.
$
Since $d_j>0$, both families of terms defining $\epsilon^\star$ in
Proposition~\ref{prop:offset} are nonpositive. Hence
$\epsilon^\star\le0$. The nominal-aware constraints then give
$
    \tilde b_k
    \le s_k+\epsilon^\star
    \le s_k,
    \; k=1,\ldots,K,
$
so
$
    \tilde a_{ik}^{\top}v^{\mathrm{nom}}
    =s_k\ge\tilde b_k .
$
Therefore
$v^{\mathrm{nom}}\in\widetilde{\mathcal U}_i$.
\end{proof}

Together, Theorem~\ref{thm:main} and Corollary~\ref{cor:nominal} provide
complementary guarantees. The reduced program admits no control excluded by
the full tightened constraints, while the nominal-aware offset construction
preserves the prescribed nominal control whenever it is admissible for the
full set. Thus containment preserves safety without unnecessarily excluding
an already-safe nominal action. The containment guarantee is independent of
the rule used to propose the retained set, provided the accepted set satisfies
Lemma~\ref{lem:farkas}; with the offsets of
Proposition~\ref{prop:offset}, nominal-control preservation is likewise
independent of the proposal rule. The selector therefore affects only the
achieved reduction and computational cost.


\section{Implementation}

The reduction is applied directly to the already tube-tightened eCBF
halfspaces
$
    a_{ij}(l)^\top \bar v^i_{l|t_k}\ge b_{ij}(l),
    \; j\in\mathcal J_i.
$
Because the predicted geometry varies along the horizon, the stage-dependent
normals, conic multipliers, and reduced offsets are recomputed at every
prediction stage. The retained indices are reused whenever possible:
an initial set $\mathcal K_i(0)$ is selected at $l=0$, while for $l>0$
the previous index set is reused and re-certified using the current-stage
normals. If cone coverage fails, the retained set is reselected or enlarged
for that stage and certified again. Thus the stagewise construction follows
$
    \text{select}
    \;\rightarrow\;
    \text{reuse and re-certify}
    \;\rightarrow\;
    \text{reselect/enlarge only on directional failure}.
$
Only the retained indices are reused; the normals, multipliers, and offsets
are always recomputed from the current predicted geometry. Any subsequent
enlargement caused by reduced-OCP infeasibility is handled separately at the
OCP level. This is all summarized in Figure~\ref{fig:certified-reduction-flow} and Algorithm~\ref{alg:reduction}.

\begin{figure*}[!t]
\centering
\includegraphics[width=\textwidth]{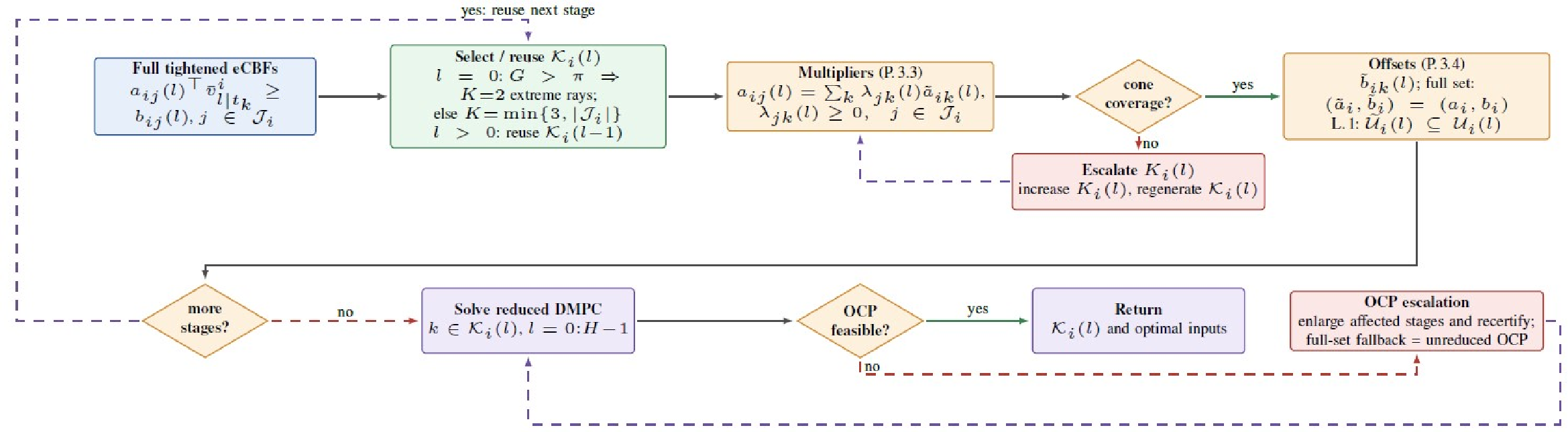}
\caption{Certified eCBF constraint-reduction workflow for agent $i$.
Each prediction stage selects or reuses a retained set, certifies cone
coverage, and constructs offsets yielding
$\widetilde{\mathcal U}_i(l)\subseteq\mathcal U_i(l)$.
Directional failure triggers stage-local escalation, while OCP infeasibility
enlarges affected stage sets up to the exact full-set fallback.}
\label{fig:certified-reduction-flow}
\end{figure*}

\begin{algorithm}[t]
\caption{Certified eCBF Constraint Reduction for Agent $i$}
\label{alg:reduction}
\small

\textbf{Data:}
Predicted states
$\{\bar x^i_{l|t_k}\}_{l=0}^{H}$,
neighbor predictions
$\{\hat x^j_{l|t_k}\}_{j\in\mathcal N_i,\;l=0}^{H}$,
obstacle predictions; nominal inputs
$\{\bar v^{i,\mathrm{nom}}_{l|t_k}\}_{l=0}^{H-1}$; full tightened eCBF
constraints \eqref{eq:lin}--\eqref{eq:bij}.
\medskip

\begin{algorithmic}[1]

\State \textbf{Construct full horizon constraints:}
for each $j\in\mathcal J_i$ and $l=0,\ldots,H-1$, compute
\[
a_{ij}(l)^\top\bar v^i_{l|t_k}\ge b_{ij}(l).
\]

\State \textbf{Initial selection:}
sort $\{a_{ij}(0)\}_{j\in\mathcal J_i}$ by angle and compute the largest
gap $G(0)$. If $G(0)>\pi$, set $K_i(0)\leftarrow2$ and retain the two
extreme rays (Prop.~\ref{prop:cone2}); otherwise set
$
    K_i(0)\leftarrow\min\{3,|\mathcal J_i|\}
$
and generate a candidate $\mathcal K_i(0)$ using the prescribed selector.
Candidate acceptance is determined by cone coverage in the stagewise
certification below.

\For{$l=0,\ldots,H-1$}

    \If{$l>0$}
        \State Reuse the preceding retained index set:
        $\mathcal K_i(l)\leftarrow\mathcal K_i(l-1)$.
    \EndIf

    \State Assign identity multipliers to retained constraints and compute,
    for every $j\in\mathcal J_i\setminus\mathcal K_i(l)$, the
    stage-dependent conic multipliers $\lambda_j(l)$ using
    Proposition~\ref{prop:cf}.

    \While{cone coverage fails and
           $|\mathcal K_i(l)|<|\mathcal J_i|$}
        \State Increase the retained-set size and generate a new candidate
        $\mathcal K_i(l)$ from the stage-$l$ geometry.
        \State Recompute the retained identity multipliers and the
        multipliers of all dropped constraints.
    \EndWhile

    \If{$\mathcal K_i(l)=\mathcal J_i$}
        \State Use the exact full-set fallback:
        $\tilde a_i(l)\leftarrow a_i(l)$,
        $\tilde b_i(l)\leftarrow b_i(l)$.
    \Else
        \State Construct $\tilde b_i(l)$ using
        Proposition~\ref{prop:offset}.
    \EndIf

\EndFor

\State Solve the DMPC OCP using
\[
    \tilde a_{ik}(l)^\top \bar v^i_{l|t_k}
    \ge
    \tilde b_{ik}(l),
    \qquad
    k\in\mathcal K_i(l),\quad
    l=0,\ldots,H-1 .
\]

\While{the reduced OCP is infeasible and some stage is not at full fallback}
    \State Enlarge one or more non-full retained sets and recertify the affected
stages; if a stage reaches the full set, restore its original normals and
offsets. Resolve the OCP.
\EndWhile

If $\mathcal K_i(l)=\mathcal J_i$ for all $l$, the reduced OCP coincides
exactly with the unreduced OCP; hence feasibility of the latter guarantees
finite recovery under full-set escalation.

\State \textbf{Return:}
$\{\bar v^{i*}_{l|t_k}\}_{l=0}^{H-1}$ and the certified retained sets, $\{\mathcal K_i(l)\}_{l=0}^{H-1}$.

\end{algorithmic}

\medskip
\textbf{Fallback:}

At any stage $l$, if
$ \mathcal K_i(l)=\mathcal J_i, $
no safety constraint is dropped. Hence
$ \tilde a_{ik}(l)=a_{ik}(l),
    \;
    \tilde b_{ik}(l)=b_{ik}(l), $
and choosing $\lambda_{jj}(l)=1$ and
$\lambda_{jk}(l)=0$ for $k\neq j$ satisfies both conditions of
Lemma~\ref{lem:farkas} with equality.
\end{algorithm}

\begin{remark}
If the certified reduced OCP is infeasible, the retained sets are enlarged
at selected prediction stages and the corresponding certificates are
recomputed. This escalation may continue up to the full-set fallback
$
    \mathcal K_i(l)=\mathcal J_i,
    \; l=0,\ldots,H-1.
$
At full fallback, the original normals and offsets are restored,
$
    \tilde a_{ij}(l)=a_{ij}(l),
    \;
    \tilde b_{ij}(l)=b_{ij}(l),
$
so no safety constraint is removed and the reduced OCP coincides exactly
with the unreduced OCP. Therefore, provided the escalation procedure is
allowed to reach this fallback, feasibility of the unreduced OCP guarantees
that the escalation terminates with a feasible problem.
\end{remark}

\section{Numerical Simulation} \label{NUMERICALEXAMPLE}
The numerical scenario extends the reference implementation
of~\cite{koulong2026cdc} to ten followers and four obstacles. We retain its
follower jerk-servo model, five desired-acceleration fields, eCBF gains
$(\kappa_0,\kappa_1,\kappa_2)=(30,38,3)$, safety distance
$d_{\mathrm{safe}}=0.1$\,m, input bound $u_{\max}=300$, and two original
obstacles. Agents $6$--$10$ reuse the five follower fields cyclically.
The present study adds two obstacles and specifies new formation offsets
$\psi^i$, a leader trajectory, sampling period $T_s=0.05$\,s, disturbance
bound $\|w^i\|_\infty\le0.4$.

\begin{figure*}[t]
    \centering
    \begin{subfigure}[b]{0.32\textwidth}
        \centering
        \includegraphics[height=1.08in]{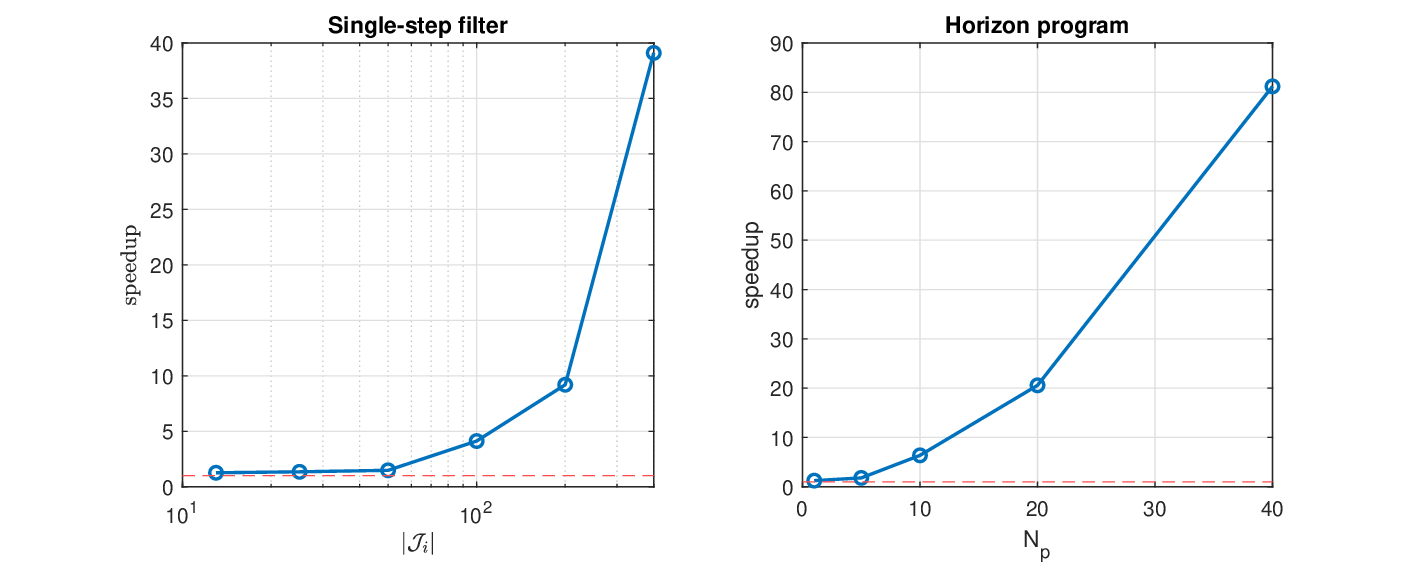}
        \caption{Computational speedup versus constraint count and horizon.}
        \label{fig:speedup_summary}
    \end{subfigure}
    \hfill
    \begin{subfigure}[b]{0.32\textwidth}
        \centering
        \includegraphics[height=1.08in]{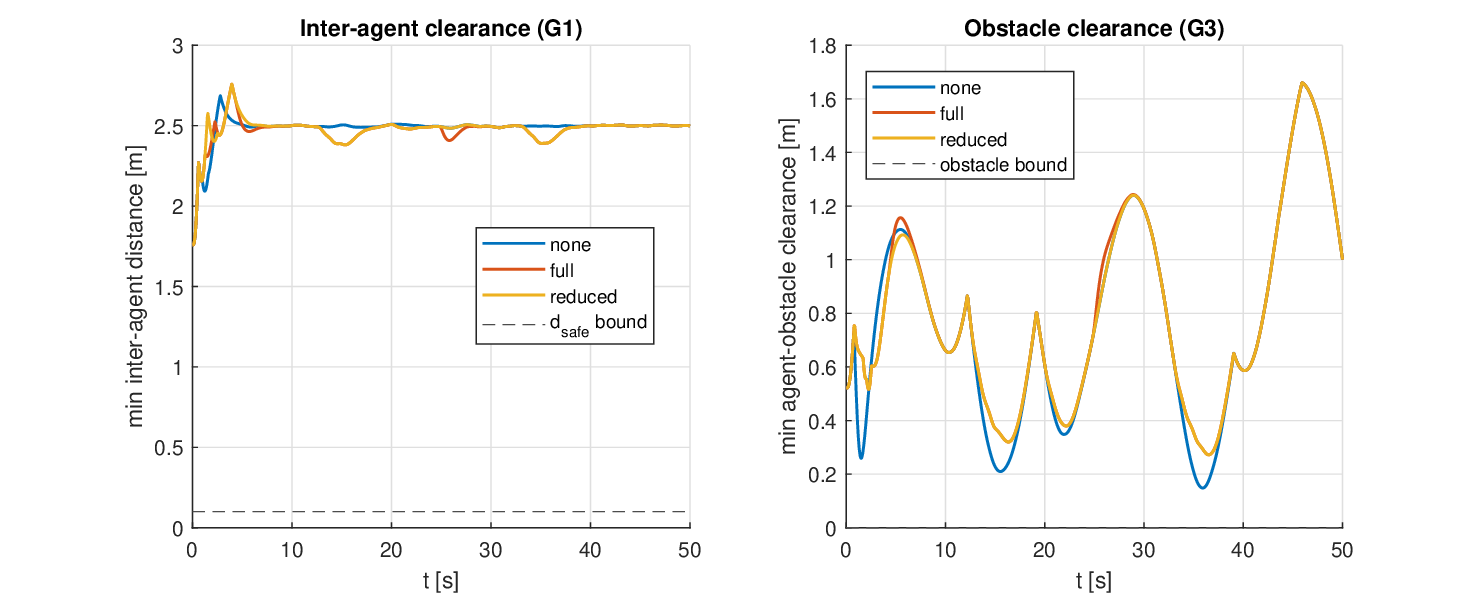}
        \caption{Inter-agent and obstacle-clearance histories.}
        \label{fig:clearance_summary}
    \end{subfigure}
    \hfill
    \begin{subfigure}[b]{0.32\textwidth}
        \centering
        \includegraphics[height=1.08in]{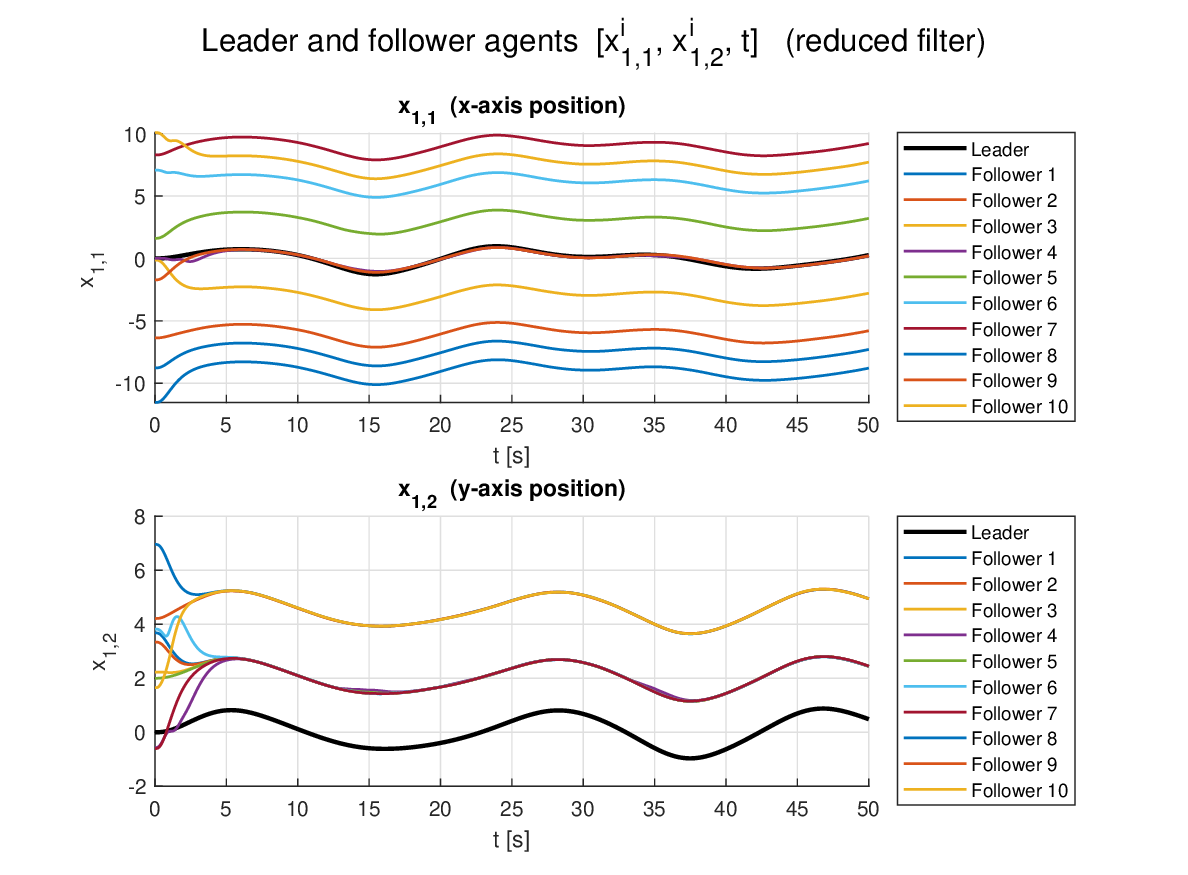}
        \caption{Leader and follower trajectories under the reduced controller.}
        \label{fig:position_summary}
    \end{subfigure}
    \caption{Computational and closed-loop results for the ten-follower,
    four-obstacle study. (a) Speedup of the reduced formulation relative to
    the full formulation as the number of safety constraints and prediction
    horizon increase. (b) Representative inter-agent and agent--obstacle
    clearances for the unfiltered, full, and reduced controllers.
    (c) Leader and follower position trajectories under the reduced controller.}
    \label{fig:numerical_summary}
\end{figure*}

\subsection{Closed-loop results}

\begin{center}
\small
\setlength{\tabcolsep}{4pt}
\begin{tabular}{lrrr}
\toprule
& no filter & full ($13$) & reduced\\
\midrule
final formation error    & $0.070722$\,m & $0.070722$\,m & $0.070722$\,m\\
min obstacle clearance   & $0.1483$\,m & $0.2721$\,m & $\mathbf{0.2721}$\,m\\
min inter-agent distance & $1.75$\,m & $1.75$\,m & $1.75$\,m\\
mean $K_i(l)$            & --- & $13.00$ & $\mathbf{2.80}$\\
nominal rejected         & --- & $2.7\%$ & $2.5\%$\\
constraint violations    & --- & $0.0\%$ & $\mathbf{0.0\%}$\\
escalated                & --- & --- & $2.4\%$\\
$G>\pi$ / $G\le\pi$      & --- & --- & $44\%$ / $56\%$\\
\bottomrule
\end{tabular}
\end{center}

The reduced formulation retains $2.80$ safety constraints on average versus
$13$ for the full formulation, a $4.6\times$ reduction. The initial selector
uses $K=2$ for $G>\pi$ and $K=3$ otherwise; only $2.4\%$ of instances require
directional escalation. Full and reduced formulations give the same reported
final formation error and minimum clearances, with no violations of the
original tightened halfspaces. Here, ``constraint violations'' is defined only
for filtered runs and denotes violation of at least one original
tube-tightened eCBF halfspace by the applied input.

Figure~\ref{fig:position_summary} shows convergence to the prescribed
formation followed by bounded-error leader tracking. In
Figure~\ref{fig:clearance_summary}, inter-agent separation remains well above
$d_{\mathrm{safe}}=0.1$\,m, while the minimum obstacle clearance increases
from $0.1483$\,m without filtering to $0.2721$\,m with either safety
formulation. Thus the unfiltered run remains geometrically safe, but the
observable filtering effect is primarily increased obstacle clearance.

\subsection{Where the reduction pays}
\label{sec:wallclock}

To expose scaling beyond the single-step closed-loop study, synthetic
clustered constraint sets vary $|\mathcal J_i|$ and $N_p$ independently.
Sweep~A fixes $N_p=1$, whereas Sweep~B fixes $|\mathcal J_i|=13$.

\begin{center}
\small
\setlength{\tabcolsep}{2pt}
\begin{tabular}{rrrr@{\hskip 1.5em}rrrr}
\toprule
\multicolumn{4}{c}{Sweep A: $N_p=1$}
& \multicolumn{4}{c}{Sweep B: $|\mathcal J_i|=13$}\\
\cmidrule(lr){1-4}\cmidrule(lr){5-8}
$|\mathcal J_i|$ & full & red. & ratio
& $N_p$ & full & red. & ratio\\
\midrule
13  & $0.871$   & $0.691$ & $1.26\times$ & 1  & $0.875$   & $0.704$ & $1.24\times$\\
25  & $0.870$   & $0.647$ & $1.35\times$ & 5  & $2.848$   & $1.591$ & $1.79\times$\\
50  & $1.409$   & $0.951$ & $1.48\times$ & 10 & $15.963$  & $2.507$ & $6.37\times$\\
100 & $6.108$   & $1.482$ & $4.12\times$ & 20 & $58.728$  & $2.855$ & $20.57\times$\\
200 & $17.420$  & $1.895$ & $9.19\times$ & 40 & $400.894$ & $4.937$ & $\mathbf{81.20\times}$\\
400 & $128.104$ & $3.276$ & $\mathbf{39.10\times}$ & & & & \\
\bottomrule
\end{tabular}
\end{center}

Times are milliseconds per solve and include all reduction overhead. Across
Sweep~A, full and reduced solve times grow by about $147\times$ and
$4.7\times$; across Sweep~B the corresponding factors are $458\times$ and
$7.0\times$. This reflects the scaling
$
N_p|\mathcal J_i|
\;\text{versus}\;
\sum_{l=0}^{N_p-1}K_i(l)\approx N_p\bar K_i .
$
At $|\mathcal J_i|=13$ and $N_p=1$, the isolated benchmark gives only a
$1.24$--$1.26\times$ speedup, while the complete closed-loop implementation
gives $0.94\times$ end-to-end. Thus reduction is essentially cost neutral for
the small single-step case but becomes increasingly beneficial as
$|\mathcal J_i|$ or $N_p$ grows.

\section{CONCLUSION}\label{CONCLUSION}
This paper presented a certified reduction framework for tube-tightened
multi-agent eCBF constraints in distributed MPC. The method exploits cone
coverage to retain only a small subset of safety halfspaces while guaranteeing
that the reduced admissible set remains contained in the full one. The resulting
controller therefore preserves the safety guarantees of the unreduced
formulation while reducing the number of explicitly enforced constraints.

Numerical results confirm that the reduction preserves the reported closed-loop
performance of the full formulation and becomes increasingly beneficial as the
number of safety constraints or prediction horizon grows. Future work will
extend the planar geometric construction to higher-dimensional input spaces.


\end{document}